\newcommand\Class[1]{%
  \mathchoice%
  {\text{\normalfont\fontsize{10pt}{10pt}\selectfont$\mathsf{#1}$}}%
  {\text{\normalfont\fontsize{10pt}{10pt}\selectfont$\mathsf{#1}$}}%
  {\text{\normalfont$\mathsf{#1}$}}%
  {\text{\normalfont$\mathsf{#1}$}}%
}
\newtheorem{fact}[theorem]{Fact}
{%
  \leavevmode\nobreak\par
	\begin{list}%
		{}%
		{%
			\def\labelstyle{\itshape}
			\setlength{\topsep}{0pt}%
			\settowidth{\labelwidth}{\labelstyle Parameter:}%
			\setlength{\leftmargin}{\labelwidth}%
			\addtolength{\leftmargin}{\labelsep}%
			\setlength{\itemsep}{0pt}%
			\setlength{\parsep}{0pt}%
		}%
	}%
	{%
	\end{list}%
}
\DeclareAcronym{APTAS}{
    short = \textsc{APTAS}, 
    long=asymptotic polynomial time approximation scheme,
    class=abbrev
}
\DeclareAcronym{FPT}{
    short = \textsc{FPT}, 
    long=\textsc{Fixed Parameter Tractable},
    class=abbrev
}
\DeclareAcronym{3HS}{
    short = \textsc{3-HS}, 
    long=\textsc{3-Hitting Set},
    class=abbrev
}
\DeclareAcronym{p3HS}{
    short = \textsc{$p$-3-HS}, 
    long=\textsc{3-Hitting Set},
    class=abbrev
}
\DeclareAcronym{VP}{
    short = \textsc{VP}, 
    long=\textsc{Vector Packing},
    class=abbrev
}
\DeclareAcronym{pVP}{
    short = \textsc{$p$-VP}, 
    long=\textsc{Vector Packing},
    class=abbrev
}
\DeclareAcronym{VMKP}{
	short = \textsc{VMKP}, 
	long=\textsc{Vector Multiple Knapsack},
	class=abbrev
}
\DeclareAcronym{pVMKP}{
	short = \textsc{$p$-VMKP}, 
	long=\textsc{Vector Multiple Knapsack},
	class=abbrev
}
\DeclareAcronym{MKP}{
	short = \textsc{MKP}, 
	long=\textsc{Multiple Knapsack},
	class=abbrev
}
\DeclareAcronym{pMKP}{
	short = \textsc{$p$-MKP}, 
	long=\textsc{Multiple Knapsack},
	class=abbrev
}
\DeclareAcronym{VC}{
    short = \textsc{VC}, 
    long=\textsc{Vector Covering},
    class=abbrev
}
\DeclareAcronym{pVC}{
    short = \textsc{$p$-VC}, 
    long=\textsc{Vector Covering},
    class=abbrev
}
\DeclareAcronym{BP}{
    short = \textsc{BP}, 
    long=\textsc{Bin Packing},
    class=abbrev
}
\DeclareAcronym{pBP}{
    short = \textsc{$p$-BP}, 
    long=\textsc{Bin Packing},
    class=abbrev
}
\DeclareAcronym{BC}{
    short = \textsc{BC}, 
    long=\textsc{Bin Covering},
    class=abbrev
}
\DeclareAcronym{pBC}{
    short = \textsc{$p$-BC}, 
    long=\textsc{Bin Covering},
    class=abbrev
}
\DeclareAcronym{POTRM}{
    short = \textsc{PORM}, 
    long=\textsc{Perfect Over-the-Rainbow Matching},
    class=abbrev
}
\DeclareAcronym{pPOTRM}{
    short = \textsc{$p$-PORM}, 
    long=\textsc{Perfect Over-the-Rainbow Matching},
    class=abbrev
}
\DeclareAcronym{pCM}{
    short = \textsc{$p$-CM}, 
    long=\textsc{Conjoining Matching},
    class=abbrev
}
\DeclareAcronym{CM}{
	short = \textsc{CM}, 
	long=\textsc{Conjoining Matching},
	class=abbrev
}
  \DeclareAcronym{CGM}{
    short = \textsc{CGM}, 
    long=\textsc{Conjoining General Matching},
    class=abbrev
  }
\def\claimedtheorem{}
\newcommand{\NN}{\mathbb{N}}
\newcommand{\QQ}{\mathbb{Q}}
\newcommand{\dotcup}{\ensuremath{\mathop{\dot{\cup}}}}
\newcommand{\vectors}{\mathcal{V}}
\newcommand{\items}{\mathcal{I}}
\newcommand{\profit}[1]{p(#1)}
\newcommand{\colors}{\mathcal{C}}
\newcommand{\cO}{O}
\title{Solving Packing Problems with Few Small Items Using Rainbow Matchings}
\author{Max Bannach}{Institute for Theoretical Computer Science, Universit{\"a}t zu L{\"u}beck, L{\"u}beck, Germany}{bannach@tcs.uni-luebeck.de}{https://orcid.org/0000-0002-6475-5512}{}
\author{Sebastian Berndt}{Institute for IT Security, Universit{\"a}t zu L{\"u}beck, L{\"u}beck, Germany}{s.berndt@uni-luebeck.de}{https://orcid.org/0000-0003-4177-8081}{}
\author{Marten Maack}{Department of Computer Science, Universit{\"a}t Kiel, Kiel, Germany}{mmaa@informatik.uni-kiel.de}{https://orcid.org/0000-0001-7918-6642}{}
\author{Matthias Mnich}{Institut für Algorithmen und Komplexit{\"a}t, TU Hamburg, Hamburg, Germany}{mmnich@tuhh.de}{https://orcid.org/0000-0002-4721-5354}{Supported by DFG grants MN 59/1-1 and MN 59/4-1.}
\author{Alexandra Lassota}{Department of Computer Science, Universit{\"a}t Kiel, Kiel, Germany}{ala@informatik.uni-kiel.de}{https://orcid.org/0000-0001-6215-066X}{Supported by DFG grant ``Strukturaussagen und deren Anwendung in Scheduling- und Packungsprobleme'', JA 612/20-1}
\author{Malin Rau}{Univ. Grenoble Alpes, CNRS, Inria, Grenoble INP, LIG, 38000 Grenoble, France}{malin.rau@inria.fr}{https://orcid.org/0000-0002-5710-560X}{}
\author{Malte Skambath}{Department of Computer Science, Universit{\"a}t Kiel, Kiel, Germany}{malte.skambath@email.uni-kiel.de}{https://orcid.org/0000-0003-2048-3559}{}
\authorrunning{M. Bannach et al.}
\keywords{Bin Packing, Knapsack, matching, fixed-parameter tractable}
\newif\iflongversion
\newif\ifshortversion
\begin{document}

\maketitle

\begin{abstract}
  An important area of combinatorial optimization is the study of packing and covering problems, such as \acl{BP}, \acl{MKP}, and \acl{BC}.
  Those problems have been studied extensively from the viewpoint of approximation algorithms, but their parameterized complexity has only been investigated barely. 
  For problem instances containing no ``small'' items, classical matching algorithms yield optimal solutions in polynomial time.
  In this paper we approach them by their \emph{distance from triviality}, measuring the problem complexity by the number $k$ of small items.
  
  Our main results are fixed-parameter algorithms for vector versions of \acl{BP}, \acl{MKP}, and \acl{BC} parameterized by $k$.
  The algorithms are randomized with one-sided error and run in time $4^k\cdot k!\cdot n^{O(1)}$.
  To achieve this, we introduce a colored matching problem to which we reduce all these packing problems.
  The colored matching problem is natural in itself and we expect it to be useful for other applications.
  We also present a deterministic fixed-parameter algorithm for \acl{BP} with run time $\cO((k!)^2 \cdot k \cdot 2^k \cdot n \log(n))$.
\end{abstract}

\section{Introduction}
\label{sec:introduction}
An important area of combinatorial optimization is the study of packing and covering problems.
Central among those is the {\sc Bin Packing} problem, which has sparked numerous important algorithmic techniques.
In {\sc Bin Packing}, the goal is to pack a set of $n$ items with sizes in $(0,1]$ into as few unit-sized bins as possible.
Referring to its simplicity and vexing intractability, this problem has been labeled as ``the problem that wouldn't go away'' more than three decades ago~\cite{GareyJ1981} and is still the focus of groundbreaking research today.
Regarding approximability, the best known is an additive $O(\log \textnormal{OPT})$-approximation, due to Hoberg and Rothvo{\ss}~\cite{GoemansR2014,HobergR2017}.

A recent trend is to apply tools from parameterized complexity theory to problems from operations research~\cite{MnichvB2018}.
For {\sc Bin Packing}, a natural parameter is the minimum number of bins.
For this parameter, Jansen et al.~\cite{JansenKMS2013} showed that this problem is $\mathsf{W}[1]$-hard, even for instances encoded in unary.
Another natural parameter is the number~$d$ of distinct item sizes.
For $d = 2$, a polynomial-time algorithm was discovered by McCormick et al.~\cite{McCormickSS1997,McCormickSS2001} in the 1990s. 
The complexity for all $d\geq 3$ was open for more than 15 years, until a breakthrough result of Goemans and Rothvo{\ss}~\cite{GoemansR2014} showed that {\sc Bin Packing} can be solved in time $(\log\Delta)^{2^{2^d}}$, where $\Delta$ is the largest number in the input.
A similar result was shown later by Jansen and Klein~\cite{JansenK2017}.
Neither the algorithm by Goemans and Rothvo{\ss} nor the algorithm by Jansen and
Klein are fixed-parameter algorithms for parameter $d$, which would require the
algorithm to run in time $f(d)\cdot n^{O(1)}$ for some computable function
$f$\footnote{Although the algorithm by Jansen and Klein is a fixed-parameter
  algorithm for the parameter $|V_{I}|$~--~the number of vertices of the
  integer hull of the underlying knapsack polytope.}.

In light of these daunting results, we propose another natural parameter for {\sc Bin Packing}.
This parameter is motivated by the classical approach of parameters measuring the \emph{distance from triviality}---a concept that was first proposed by Niedermeier~\cite[Sect. 5.4]{Niedermeier2006}.
Roughly speaking, this approach measures the distance of the given instance from an instance which is solvable in polynomial time. 
This approach was already used for many different problems such as \textsc{Clique}, \textsc{Set Cover}, \textsc{Power Dominating Set}, or \textsc{Longest Common Subsequence}~\cite{GuoHN2004}.
Even one of the arguably most important graph parameters~---treewidth---is often interpreted as the distance of a given
graph from a tree~\cite{GuoHN2004}.
Interestingly, the number of special cases where \acl{BP} can be solved in polynomial time is rather small and the corresponding algorithms
often rely on reductions to matching problems. 
In this work, we propose as novel parameter the distance from instances without \emph{small items}. 
If no small item (with size at most $1/3$) exists, \acl{BP} becomes polynomial-time solvable via a reduction to the matching problem as each bin can contain at most two items.
If the number of small items is unbounded, the problem becomes $\Class{NP}$-hard.

Two related problems to \acl{BP} are \acl{BC}, where the number of covered bins
(containing items of total size at least $1$) should be maximized, and
\acl{MKP}---a generalization of the {\sc Knapsack} problem. These problems have
been studied extensively (see the books by Gonzalez~\cite{Gonzalez2007} and Kellerer et al.~\cite{KellererPP2004}).
They share the \acl{BP} trait that the efficiency of exact algorithms is hindered by the existence of small objects.

In all mentioned problems, the items have a one-dimensional size requirement.
As this is too restrictive in many applications, so-called \emph{vector versions} were proposed~\cite{AlonACESVW1998,GareyGJY1976}. 
In these versions, called \acl{VP}, \acl{VC}, and \acl{VMKP}, each object has a $d$-dimensional size requirement and a set of objects can be packed only if the size constraints are fulfilled in \emph{each} dimension $j=1,\dots,d$.
These problems are much harder than their $1$-dimensional version, e.g., \acl{VP} does not admit an \acl{APTAS} even for $d=2$~\cite{Woeginger1997}.
For $d$-dimensional problems, we use the word \emph{vectors} instead of \emph{items} and \emph{containers} instead of \emph{bins}.

\vspace{-1em}
\subparagraph*{What it means to be small.}
In the one-dimensional version of \acl{VP}, the definition of a small item is quite natural: Every item with size less or equal than $1/3$ is considered small.
As a consequence, each bin can contain at most two large items.
We would like to transfer this property from one dimension to the $d$-dimensional case. 

The requirement for large items is that only two of them can be placed inside the same container.
We call a subset of vectors $\vectors' \subseteq\vectors$ \emph{3-incompatible} if no selection of three distinct vectors from
$\vectors'$ may be placed in the same container, i.e., for each $u,v,w\in \vectors'$ there exists an $\ell \in \{1, \dots, d\}$
such that $u^{\ell} + v^{\ell} + w^{\ell} > T^{\ell}$, where $T^{\ell}$ is the \emph{capacity constraint} of the container in dimension~$\ell$.
Let $\vectors_L \subseteq\vectors$ be a \emph{largest} 3-incompatible set; we call the vectors $v \in \vectors_L$ \emph{large} and call the vectors from the set $\vectors_{S} = \vectors \setminus \vectors_L$ \emph{small}.
Moreover, we define the \emph{number of small vectors} in $\vectors$ as the cardinality of the complement of a largest 3-incompatible set in~$\vectors$.
Note that each 3-incompatible set $\vectors'$ contains at most two vectors where all the entries have size of at most $1/3$.
Hence, for \acl{BP} the largest 3-incompatible set corresponds to the set of large items plus at most two additional items. 


An important property of our definition is that the smallness of a vector is no longer an attribute of the vector itself, but needs to be treated with regard to all other vectors. 
Finding a set $\vectors_{S}\subseteq \vectors$ of small vectors of minimum
cardinality might be non-trivial. 
We argue that this task is fixed-parameter tractable parameterized by $|\vectors_{S}|$.
To find $\vectors_{S}$, we compute the largest 3-incompatible set in $\vectors$.
The complement $\vectors_{S} = \vectors \setminus \vectors_L$ of a largest 3-incompatible set can be found in time $f(|\vectors_{S}|)\cdot n^{O(1)}$ by a reduction to \acl{3HS}.
In this problem, a collection of sets $S_1, \dots, S_n \subseteq T$ with $|S_i| = 3$ is given, and a set~$H \subseteq T$ with $H\cap S_i$ for all $i \in \{1,\dots, n\}$ is sought.
In \Cref{sec:ReductionToOverTheRainbow}, we present a reduction from the problem of finding the sets $\vectors_{L}$ and $\vectors_{S}$ to an instance of the \acl{p3HS} problem, which we can solve using:

\begin{fact}[\cite{FafianieK2015,NiedermeierR2003,vanBevern2014}]
\label{t:HittingSet}
  \acl{p3HS} can be solved in time $2.27^k \cdot n^{O(1)}$, where $k$ is the size of the solution.
  A corresponding solution can obtained within the same time.
\end{fact}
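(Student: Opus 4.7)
The plan is to use a bounded search tree (branching) algorithm, which is the classical technique for \acl{3HS}. First, I would describe the elementary $3^k$ algorithm: while there is an unhit triple $S_i = \{a, b, c\}$, branch into three cases corresponding to adding $a$, $b$, or $c$ to a partial hitting set $H$, decrementing the budget $k$ by one each time. The recursion terminates once $k$ reaches $0$ (fail) or every triple is hit (succeed), and correctness is immediate since every hitting set must contain at least one element of every triple.

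To push the base of the exponent down from $3$ towards $2.27$, I would employ a more refined case analysis driven by the local structure of the hypergraph $(T,\{S_1,\ldots,S_n\})$. The key idea is to exploit frequencies: if some element $v\in T$ appears in many triples, then branching on ``is $v$ in the solution?'' is very profitable, because the ``$v\in H$'' side hits many triples simultaneously, while on the ``$v\notin H$'' side every triple through $v$ degenerates to a $2$-set, so that the residual problem restricted to these triples becomes \textsc{Vertex Cover}, which admits much tighter branching. If no element has high frequency, then low frequency itself guarantees that the straightforward branch-on-triple rule shrinks the instance quickly in at least two of its three subcalls, since the chosen element belongs to only few other triples that are also eliminated.

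The main obstacle will be to design the complete collection of branching rules, together with the precise thresholds governing which rule applies, so that each rule simultaneously obeys a recurrence whose largest real root is at most $\approx 2.27$, and to verify that the per-step work is only polynomial in $n$. This is exactly the bookkeeping carried out by Niedermeier and Rossmanith~\cite{NiedermeierR2003}, with refinements in \cite{FafianieK2015,vanBevern2014}; the verification amounts to upper-bounding the characteristic roots of a finite family of recurrences of the form $T(k) \leq T(k-a_1) + \cdots + T(k-a_r)$.

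Finally, producing an explicit hitting set $H$ of size at most $k$ (rather than only a yes/no answer) fits within the same running time: along the successful root-to-leaf path of the search tree, record at each branching step which element was added to $H$, and output the accumulated set at the accepting leaf. The total overhead is polynomial, so the claimed $2.27^k\cdot n^{O(1)}$ bound carries over to the constructive version.
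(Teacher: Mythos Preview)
The paper does not give its own proof of this statement: it is recorded as a \emph{Fact} with three citations and used as a black box. So there is nothing to compare your argument against on the paper's side.

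Your outline is a faithful high-level description of the Niedermeier--Rossmanith branching strategy, and the final paragraph on extracting an explicit hitting set along the accepting branch is correct. One small inaccuracy: you describe \cite{FafianieK2015,vanBevern2014} as ``refinements'' of the branching, but those papers are about (linear-time) kernelization for $d$-Hitting Set, not about improving the branching constant. The role they play in the cited bound is orthogonal: one first kernelizes to a polynomial-in-$k$ instance and then applies the $2.27^k$ branching of \cite{NiedermeierR2003}; together this yields the additive $2.27^k + n^{O(1)}$ (and hence the stated multiplicative) form. Your sketch also, appropriately, does not attempt the full case analysis that actually pins down the constant $2.27$; since you explicitly defer that to \cite{NiedermeierR2003}, your write-up is in effect doing the same thing the paper does---citing the result---just with more context around it.
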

%
%

\subparagraph*{Our results.}
We settle the parameterized complexity of the vector versions of \acl{BP}, \acl{BC}, and \acl{MKP} parameterized by the number $k$ of small objects.
Our main results are randomized fixed-parameter algorithms, which solve all those problems in time $O(k!)\cdot n^{O(1)}$ with one-sided error where $n$ is the total number of objects.
Note that \acl{VMKP} is already $\Class{NP}$-hard for $d=1$ and $p_{\max}\leq n^{O(1)}$~\cite{Geng1986,MartelloT1990} where $p_{\max}$ denotes the largest profit of any object.
\begin{theorem}\label{thm:fpt-packing-algorithms}
  {\sc Vector Packing} and {\sc Vector Covering} can be solved by a randomized algorithm (with bounded false negative rate in $n$) in  time $4^k\cdot k!\cdot n^{O(1)}$.
  {\sc Vector Multiple Knapsack} can be solved by a randomized algorithm (with bounded false negative rate in $n+p_{\max}$) in time $4^k\cdot k!\cdot n^{O(1)}\cdot (p_{\max})^{O(1)}$ where $p_{\max}$ is the largest profit of any vector. 
\end{theorem}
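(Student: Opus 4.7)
The plan is to reduce each of the three packing problems to a common colored matching problem---the Conjoining Matching (CM) problem hinted at in the abstract---and then to solve CM by randomized color-coding.

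To begin, apply Fact~\ref{t:HittingSet} together with the reduction of \Cref{sec:ReductionToOverTheRainbow} to partition $\vectors$ into the large set $\vectors_L$ and the small set $\vectors_S$ with $|\vectors_S|=k$. Since $\vectors_L$ is 3-incompatible, every container of a feasible packing contains at most two large vectors. Hence the restriction of any optimum solution to $\vectors_L$ is essentially a (possibly imperfect) matching on the graph whose vertices are the large vectors and whose edges connect pairs that fit together in all $d$ dimensions.

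Next, observe that at most $k$ containers can receive any small vector, so an optimum solution induces a partition $S_1 \dotcup \dots \dotcup S_m$ of $\vectors_S$ with $m \le k$, where $S_j$ is the set of small vectors placed in the $j$-th ``mixed'' container. I would enumerate this partition; accounting for the order in which the parts correspond to distinct containers, the number of relevant enumerations is bounded by a function of the form $O(k!)$, which is the source of the $k!$ factor in the running time. For each guessed part $S_j$ we then demand that some container host $S_j$ together with at most two large vectors that are jointly feasible with $S_j$ in every coordinate. This yields $m$ prescribed ``color classes'' of compatible pairs (and singletons) of large vectors, and the remaining task is to find a matching of $\vectors_L$ that uses exactly one edge from each prescribed class and is optimal among the leftover large vectors---this is precisely CM. The reduction works uniformly for Vector Packing (minimize the number of unmatched large vectors, since each such vector needs its own bin), Vector Covering (maximize the number of covered containers), and Vector Multiple Knapsack (maximize the total profit, where profits are carried along as edge weights).

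The main obstacle is solving CM within $4^k \cdot n^{O(1)}$ time: the prescribed color classes interact, so one cannot simply pick an edge per class greedily. The plan is to randomly color the large vectors from a palette of size $O(k)$ so that, with probability at least $c^{-k}$ for some constant $c \le 4$, every prescribed pair receives a pattern of colors that is uniquely identifiable and disjoint from every other prescribed pair. Conditioned on such a ``good'' coloring, one can determine which edges of $\vectors_L$ realize each prescribed color class and then complete the matching on the remaining large vectors by a single call to a polynomial-time (weighted) matching algorithm. Amplifying by $O(4^k)$ independent trials drives the false-negative probability below $1/\mathrm{poly}(n)$, and in the VMKP case below $1/\mathrm{poly}(n+p_{\max})$ at the cost of the extra $\mathrm{poly}(p_{\max})$ factor stated in the theorem. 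The principal verification obligations are the correctness of this color-coding step and the equivalence between the CM instance and the original packing; once these are discharged, the three packing results follow uniformly by plugging in the different objective functions.
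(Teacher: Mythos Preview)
Your high-level decomposition---identify the small set via 3-Hitting Set, enumerate the $O(k!)$ placements of small vectors into at most $k$ containers, and reduce the residual problem on large vectors to a constrained matching problem---matches the paper's approach. The genuine gap is your final step, where you propose to solve Conjoining Matching by color-coding.

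Color-coding succeeds when the sought object itself has size bounded by the parameter; here the object is a \emph{perfect} matching of size $|\vectors_L|/2$, and only the $k$ side constraints are small. Randomly coloring the large vectors and conditioning on the event that the $k$ prescribed edges of an optimal solution receive distinguishable color patterns does not tell you \emph{which} vertices realize those patterns in the target matching---there may be many vertices of each color. After the coloring you are still left with the task of finding a perfect matching that hits each of $k$ specified color-pair classes at least once, which is precisely a reformulation of CM and is not solvable by a single call to an ordinary weighted-matching routine. The paper instead invokes the randomized \emph{algebraic} algorithms of Gutin et al.\ and of Marx and Pilipczuk (\Cref{t:GutinCGM}), which encode the conjoining constraints into a Tutte-style matrix polynomial and detect the required monomials via Schwartz--Zippel; this is the true source of both the one-sided error and the $2^{|\colors|}$ factor. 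The paper explicitly remarks that replacing this algebraic subroutine by a combinatorial or deterministic one is open, so a color-coding argument of the kind you sketch would in fact resolve that open question.

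A secondary issue: your reduction sketch glosses over details the paper handles carefully---large vectors that sit alone in a container (handled by copy vertices $v'$ and edges $\{v,v'\}$), controlling the exact number of containers used (handled by blocker nodes $\mathcal{B}$), and the fact that in \acl{VC} three large vectors may jointly cover a container. These are repairable, but the color-coding step is not.
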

Our approach is to reduce the vector versions of packing and covering problems
to a new matching problem on edge-colored graphs, which we call \acl{POTRM}.

In the \acl{POTRM} problem, we are given a graph $G$. Each edge $e\in E(G)$ is
assigned a set of colors $\lambda(e)\subseteq \colors$ and for each color, there
is a non-negative weight $\gamma(e,c)$.
The objective is to find a perfect matching $M$ of $G$ and a function $\xi\colon
M\to \colors$ such that (i) $\chi(e)\in \lambda(e)$ for all $e\in M$ (we
can only choose from the assigned colors), (ii) $\bigcup_{e\in
  M}\chi(e) = \colors$ (every color is present in the matching), and (iii)
$\sum_{e\in M}\gamma(e,\chi(e))$ is minimized (the sum of the weights is
minimized).
The parameter for the problem is $|\colors|$, the number of different colors.

We show how to solve \acl{POTRM} by an approach that is based on the \acl{CM} problem.
The \acl{CM} problem was proposed by Sorge et al.~\cite{SorgeBNW2012}, who asked whether it is fixed-parameter tractable.
The question was resolved independently by Gutin et al.~\cite{GutinWY2017}, and by Marx and Pilipczuk~\cite{MarxP2014}, who both gave randomized fixed-parameter algorithms.
Based on both results, we derive:
\def\claimedtheorem{There is a randomized algorithm (with bounded false negative rate in $n+\ell$) that solves {\sc Perfect Over-The-Rainbow Matching} in time $2^{|C|}\cdot n^{O(1)}\cdot \ell^{O(1)}$.}
\begin{theorem}
\label{thm:over-the-rainbow-algo}
  \claimedtheorem
\end{theorem}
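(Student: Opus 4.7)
I would reduce \textsc{PORM} to the weighted variant of \acl{CM}, which admits randomized fixed-parameter algorithms due to Gutin, Wahlström, and Yeo, and to Marx and Pilipczuk.

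\textbf{Reduction.}
Given a \textsc{PORM} instance $(G,\lambda,\gamma,\colors)$, I build a labeled multigraph $G'$ on the same vertex set by replacing every edge $e=uv$ of $G$ with a bundle of parallel copies $e_c=uv$, one for each color $c\in\lambda(e)$, where $e_c$ carries label $c$ and weight $\gamma(e,c)$. Since parallel copies share both endpoints, any perfect matching of $G'$ selects at most one copy per original edge, and this choice naturally encodes the color assignment $\chi(e)=c$. Condition~(i) of \textsc{PORM} is automatic, since copies exist only for $c\in\lambda(e)$; the total edge weight coincides with the objective~(iii); and condition~(ii) becomes precisely the \acl{CM} requirement that every label of $\colors$ be used at least once.

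\textbf{Algorithm.}
The two cited algorithms solve unweighted \acl{CM} with label set $L$ in time $2^{|L|}\cdot n^{O(1)}$ by combining inclusion--exclusion over label subsets $S\subseteq L$ with an algebraic perfect-matching test (Tutte matrix / permanent-style polynomial identity testing) on the submultigraph restricted to labels in $S$. To accommodate edge weights bounded by $\ell$, I would attach to every edge an extra factor $z^w$ inside the matching polynomial, where $w$ is the integer weight of the edge and $z$ is a fresh indeterminate. The resulting univariate polynomial in $z$ has degree at most $n\ell/2$, so interpolation at $O(n\ell)$ sample points recovers all of its coefficients. The minimum feasible total weight is the smallest exponent whose coefficient is nonzero in the signed inclusion--exclusion sum over $S\subseteq\colors$; the witnessing matching is then extracted via standard self-reducibility.

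\textbf{Main obstacle.}
The delicate point is keeping the $2^{|\colors|}$ dependence intact while weights are added: weights must live only in the auxiliary variable $z$ (contributing a $\mathrm{poly}(n,\ell)$ overhead per subset) so that the exponential blowup remains confined to the color inclusion--exclusion. This yields overall time $2^{|\colors|}\cdot n^{O(1)}\cdot \ell^{O(1)}$, as claimed. The algebraic matching test has one-sided error, which propagates through the reduction; $O(\log(n+\ell))$ independent repetitions suppress it to the stated bounded false-negative rate in $n+\ell$.
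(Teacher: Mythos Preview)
Your approach is correct but takes a genuinely different route from the paper. The paper does \emph{not} stay on the same vertex set: it builds $|\colors|$ vertex-disjoint copies $G_1,\dots,G_{|\colors|}$ of $G$, where $G_c$ retains only edges carrying color $c$, and then for every original vertex $v$ adds an independent set $J(v)$ of size $|\colors|-1$ fully joined to all copies of $v$. This gadget forces exactly one copy of each vertex to be matched inside some $G_c$, so a perfect matching of the blown-up graph projects to a perfect matching of $G$ together with a color choice. The pattern graph $H'$ consists of one self-loop per color class $V(G_c)$, and the paper then invokes the weighted \acl{CM} algorithm of Lemma~\ref{t:GutinCGM} purely as a black box. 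Your construction with parallel labeled edges is more economical and avoids the vertex blow-up, but note that the resulting instance is not literally a \acl{CM} instance as defined here (and in Gutin et al.): in \acl{CM} the ``label'' of an edge is determined by the partition classes of its endpoints, so two parallel edges on the same endpoints cannot carry different labels. What you actually produce is an \emph{edge-labeled} perfect-matching problem, and you then (correctly) argue that the same inclusion--exclusion plus algebraic matching machinery solves it directly. In effect you open the black box rather than reducing to it; the paper's construction is precisely the extra work needed to fit the vertex-partition formulation and use \acl{CM} off the shelf. Both yield the stated bound; your route is more elementary, the paper's more modular.
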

This algorithm forms the backbone of our algorithms for \acl{VP}, \acl{VC}, and \acl{VMKP}.

Whether there is a deterministic fixed-parameter algorithm for \acl{CM} remains a challenging question, as also pointed out by Marx and Pilipczuk~\cite{MarxP2014}.
For some of the problems that can be solved by the randomized algebraic techniques of Mulmeley, Vazirani and Vazirani~\cite{MulmuleyVV1987}, no deterministic polynomial-time algorithms have been found, despite significant efforts.
The question is whether the use of such matching algorithms is essential for \acl{CM}, or can be avoided by a different approach.

We succeed in circumventing the randomness of our algorithm in the $1$-dimensional case of \acl{pBP}.
Namely, we develop another, deterministic algorithm for {\sc Bin Packing}, for which we prove strong structural properties of an optimal solution; those structural insights may be of independent interest.
\begin{theorem}
  {\sc Bin Packing} can be solved deterministically in  time $\cO((k!)^2 \cdot k \cdot 2^k \cdot n  \log(n)) $. 
\end{theorem}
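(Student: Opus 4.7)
I would first recall the trivial case: when there are no items of size $\leq 1/3$, every bin in \textsc{Bin Packing} contains at most two items, so an optimum can be computed by finding a maximum matching in the graph whose vertices are the items and whose edges connect two items whose sizes sum to at most $1$. Since the input is one-dimensional, a simple two-pointer sweep after sorting finds such a matching (and in fact the optimum pairing) in $O(n\log n)$. The task is therefore to handle the interaction of the $k$ small items with this matching in a way that costs only an $f(k)$ overhead.

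The central step is a structural lemma for optimum packings. Sort the large items in non-increasing order and, independently, sort the small items in non-increasing order. I would prove that there exists an optimum packing with the following canonical form: at most $k$ bins (call them \emph{mixed} bins) contain at least one small item; these mixed bins can be linearly ordered so that their multisets of small items, read off in that order, form a composition of the sorted sequence of small items into contiguous blocks; and, moreover, the at most two large items in each mixed bin can themselves be chosen monotonically from the sorted large list via a standard exchange. The proof is an exchange argument: whenever the canonical form is violated, one can swap a small item between two mixed bins (or swap a large item between a mixed and a non-mixed bin) without increasing the bin count, because the largest/smallest items participate in complementary residual-capacity inequalities. A separate, much simpler argument handles the at most $O(1)$ bins that contain only small items.

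Given the lemma, the algorithm enumerates configurations and for each one solves a matching problem on the large items. Concretely I would (i) guess an ordering of the up to $k$ mixed bins ($k!$ choices), (ii) guess the composition of the sorted small-item sequence into that ordered list of blocks ($2^{k-1}$ choices, contributing the factor $2^k$), (iii) guess, for each mixed bin, the number $0$, $1$ or $2$ of large items it receives together with the corresponding choice from the sorted large list ($k!$ choices in total, since at most $k$ mixed bins each contribute a bounded local choice compounded across at most $k$ slots), and (iv) for the remaining large items, run the $O(n\log n)$ greedy pairing above, subject to the residual capacities already fixed by the small-item composition. A factor $k$ accounts for guessing the actual number of mixed bins. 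The minimum over all configurations returns the optimum.

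\textbf{Main obstacle.} The delicate part is the structural lemma, because two naive exchange attempts fail: swapping small items between bins can violate a residual-capacity constraint imposed by the large items already present, and swapping large items can cost us the pairing structure on the non-mixed bins. The clean statement I expect to need is a simultaneous monotonicity property tying the sorted small-item order to the sorted residual-capacity order of the mixed bins, proven by a double-exchange that first normalizes the large-item content of each mixed bin (using that any two large items are interchangeable up to the matching structure) and then rearranges small items. Once this lemma is in place, the $(k!)^2\cdot k\cdot 2^k$ enumeration is straightforward bookkeeping and the per-configuration cost of $O(n\log n)$ immediately yields the claimed deterministic running time.
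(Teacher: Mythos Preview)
Your central structural lemma is false, and this breaks the whole enumeration scheme. You claim that some optimum assigns the small items to the mixed bins as \emph{contiguous blocks} of the sorted small-item sequence, which is what lets you replace a full enumeration of set partitions by the $2^{k-1}$ compositions. Here is a counterexample. Take small items $0.05,\,0.1,\,0.3$ and large items $0.45,\,0.45,\,0.65$; the total size is exactly $2$, so any $2$-bin packing must split the items into two sets of size exactly $1$. The item $0.65$ can only be completed to $1$ by $0.05+0.3$, so the \emph{unique} optimum is $\{0.05,0.3,0.65\}$ and $\{0.1,0.45,0.45\}$. The small-item multisets of the two mixed bins are $\{0.05,0.3\}$ and $\{0.1\}$, which are not contiguous in the sorted order $0.05<0.1<0.3$. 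No exchange can repair this without losing optimality, because the optimum is unique. Hence the $2^k$ enumeration over compositions does not cover the optimum, and the algorithm as described is incorrect.

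There is a second, softer gap: in step~(iii) you say you guess ``the corresponding choice from the sorted large list'' for the large items in each mixed bin and bound this by $k!$. Without a proven monotonicity of the large items across mixed bins, this is a choice from $n$ possibilities per bin, not a bounded one; you sketch such a monotonicity, but the exchange you describe interacts with the (false) contiguity of the small blocks, so it does not stand on its own.

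For comparison, the paper does \emph{not} attempt any contiguity for small items: the first $k!$ is simply an enumeration of all set partitions of the $k$ small items (bounded by the $k$-th Bell number). The $2^k$ is a guess of which partially filled bins receive exactly one large item, the factor $k$ is a threshold guess separating bins with zero versus at least one large item (using a simple swap argument on total small load), and the second $k!$ arises differently: while greedily pairing the remaining large items, whenever the current smallest large item fits with the current largest into some partially filled bin, one guesses \emph{which} such bin receives it (at most $k$ choices, at most $k$ times). Each of these steps is justified by a short exchange lemma that does not require any global ordering of the small items.
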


\ifshortversion 
\noindent
Due to space constraints, we delegate details and proofs to the full version of this paper.
\fi
 
\subparagraph*{Related Work.}
The class of small items, their relation to matching problems, and special instances without small items have been extensively studied in the literature: Shor~\cite{Shor1985,Shor1986} studies the relation between \emph{online bin packing,} where the items are uniformly
randomly chosen from $(0,1]$, and the matching problem on planar graphs.
Those problems are closely related as almost all bins in an optimal solution contain at most two items.
Csirik et al.~\cite{CsirikFLZ1990} study the Generalized First-Fit-Decreasing
heuristic for \acl{VP} and show that their strategy is optimal for instances
that contain at most two small items.
Kenyon~\cite{Kenyon1996} studies the expected performance ratio of the \emph{Best-Fit} algorithm for \acl{BP} on a worst-case instance where the items arrive in random order.
To prove an upper bound on the performance ratio, she classifies items into small items (size at most~$1/3$), medium items
(size at least $1/3$ and at most $2/3$), and large items (size at least $2/3$)~\cite{Kenyon1996}.
Kuipers~\cite{Kuipers1998} studies so-called \emph{bin packing games} where the goal is to share a certain profit in a fair way between the players controlling the bins and players controlling the items.
He only studies instances without small items and shows that every such instances has a non-empty $\varepsilon$-core (a way of spreading the profits relatively fair) for $\varepsilon \geq 1/7$.
Babel et al.~\cite{BabelCKK2004} present an algorithm with competitive ratio $1+1/\sqrt{5}$ for online bin packing without small items.
In another online version of the problem, the items and a conflict graph on them are given offline and, then online, variable-sized bins arrive.
The case that the conflict graph is the union of two cliques corresponds to instances with no small items and was studied by Epstein et al.~\cite{EpsteinFL2011}.
Another version of {\sc Bin Packing} forbids to pack more then $k$ different items into a single bin.
The special case $k=2$ corresponds to instances without small items and can be solved in time $n^{O(1/\varepsilon^{2})}$ for bins of
size $1+\varepsilon$~\cite{EpsteinLS2012}.
Finally, Bansal et al.~\cite{BansalEK2016} study approximation algorithms for \acl{VP}.
To obtain their algorithms, they present a structural lemma that states that any solution with $m$ bins can be turned into a solution with $(d+1)m/2$ bins such that each bin either contains at most two items or has empty space left in all but one dimensions.
This result is then used to reduce the problem to a multi-objective budgeted matching problem.
  
From an approximation point of view, the problems considered in this work have been studied extensively, both for the $1$-dimensional variant as well as for the vector versions. 
We refer to the survey of Christensen et al.~\cite{ChristensenKPT2017} for an overview. 
Regarding parameterized algorithms for problems from operations research, the resulting body of literature is too large for a detailed description and we refer to the survey of Mnich and van Bevern~\cite{MnichvB2018}.
Some of the $1$-dimensional variants of the problems considered in this work
have been studied from a parameterized
perspective~\cite{JansenK2017,JansenKMS2013, JansenS2011}. 
In contrast, to the best of our knowledge, there are no such results for the vector versions of these problems.


\iflongversion
\subparagraph*{Structure of the document.}
In \Cref{sec:prelim}, we briefly introduce randomized and parameterized algorithms and we define the problems studied in this work with their corresponding parameters.
\iflongversion We use \Cref{sec:ReductionToOverTheRainbow} to show the parameterized reductions from the packing and covering problems to the \acl{POTRM} problem.
\else
We present a parameterized reduction from \acl{pVC} to \acl{pPOTRM} in \Cref{sec:ReductionToOverTheRainbow}.
Similar reductions from the other packing and covering problems to \acl{pPOTRM} can be found in the full version.
\fi \Cref{sec:solve_otrpm} contains the parameter preserving transformation of the \acl{POTRM} problem to \acl{CM} and the resulting parameterized algorithms.
As the algorithm for \acl{POTRM} is randomized, so are the algorithms for the covering and packing problems.
In \Cref{sec:deterministic}, we give a deterministic parameterized algorithm for the classical $1$-dimensional version of
\acl{BP}.
Lastly, we summarize our results and state some open questions for further possible research. 
\fi

\section{Preliminaries: Parameterized and Randomized Algorithms}
\label{sec:prelim}
We give a short introduction to parameterized and randomized algorithms, and refer to the standard textbooks for
details~\cite{CyganFKLMPPS2015,Niedermeier2006}.
Afterwards, we introduce the packing problems formally.
Finally, we define our auxiliary matching problem.

\subparagraph*{Parameterized Algorithms.}
A \emph{parameterized problem} is a language~$L\subseteq \{0,1\}^{*}\times \mathbb{N}$ where the second element is called the \emph{parameter}.
Such a problem is \emph{fixed-parameter tractable} if there is an algorithm that decides whether $(x,k)$ is in $L$ in time $f(k)\cdot |x|^c$ for a computable function $f$ and constant $c$.
A \emph{parameterized reduction} from a parameterized problem~$L$ to another one~$L'$ is an algorithm that transforms an
instance~$(x,k)$ into~$(x',k')$ such that (i)~$(x,k)\in L\Leftrightarrow(x',k')\in L'$, (ii) $k'\leq f(k)$, and (iii) runs in time $f(k)\cdot |x|^c$. 

\subparagraph*{Randomized Algorithms.}
A \emph{randomized algorithm} is an algorithm that explores some of its computational paths only with a certain probability.
A randomized algorithm~$\mathsf{A}$ for a decision problem~$L$ has \emph{one-sided error} if it either correctly detects positive or negative instances with probability $1$.
It has a \emph{bounded false negative rate} if $\Pr[\mathsf{A}(x) = \text{``no''}\mid x\in L]\leq 1/|x|^c$, that is, it declares a ``yes''-instance as a ``no''-instance with probability at most $1/|x|^{c}$.
All randomized algorithms in this article have bounded false negative rate.

\subparagraph*{Packing and Covering Problems.}
In the {\sc Vector Packing} problem we aim to pack a set $\vectors = \{v_1, \dots, v_n\}\subseteq \mathbb{Q}^d_{\geq 0}$ of vectors into the smallest possible number of containers, where all containers have a common capacity constraint $T\in \mathbb{Q}^d_{\geq 0}$.
Let $v_j\in\mathcal{V}$ be a vector. We use $v^\ell_j$ to denote the $\ell^{\textnormal{th}}$ component of $v_j$ and $T^\ell$ to denote the $\ell^{\textnormal{th}}$ constraint.
A \emph{packing} is a mapping $\sigma \colon \vectors \to \NN_{>0}$ from vectors to containers.
It is \emph{feasible} if all containers~$i \in \NN_{>0}$ meet the capacity constraint, that is, for each $\ell \in \{1,\dots,d\}$ it holds that $\sum_{v_j \in \sigma^{-1}(i)} v^\ell_j \leq T^\ell$.
Using as few containers as possible means to minimize $\max\{\sigma(v_{j})\,|\, v_j \in \vectors\}$.


In the introduction we already discussed what it means to be ``small''.
We expect only few small items, so we consider this quantity as parameter for \acl{VP}:
\begin{center}
  \framebox[1.0\textwidth]{
    \begin{tabularx}{0.97\textwidth}{rXl}
      \multicolumn{3}{X}{{\sc\centering \acl{pVP}} \hfill  \textit{Parameter:} Number $k$ of small vectors} \\[.5ex]
      \textit{Input:}      & \multicolumn{2}{l}{A set $\vectors =
        \{v_1, \dots, v_n\}\subseteq \mathbb{Q}^d_{\geq 0}$ vectors and capacity constraints $T \in \mathbb{Q}^d_{\geq 0}$.}\\
      \textit{Task:}	&  \multicolumn{2}{l}{Find a packing of $\vectors$ into the smallest number of containers.}\\
  \end{tabularx}}
\end{center}

\noindent The 1-dimensional case of the problem is the \acl{BP} problem.
There, vectors are called \emph{items}, their single component \emph{size} and the containers \emph{bins}. 
In contrast to the multi-dimensional case, we are now given a \emph{sequence} of items, denoted as $\items$.\footnote{This is due to the fact that in the   multi-dimensional setting, we can simply model multiple occurrences of the same vector by introducing an additional dimension encoding the index of the vector. 
This is not possible in the one-dimensional case.}
\iflongversion

\begin{center}
  \framebox[1.0\textwidth]{
    \begin{tabularx}{0.97\textwidth}{rXl}
      \multicolumn{3}{X}{{\sc\centering \acl{pBP}} \hfill  \textit{Parameter:} Number $k$ of small items} \\[.5ex]
      \textit{Input:}      & \multicolumn{2}{l}{A sequence $\items =
        (i_1, \dots, i_n)$ of $n$ items such that $i_j \in
        \mathbb{Q}^1_{\geq 0}$ for each $i_j \in \items$,}\\
      & \multicolumn{2}{l}{and a capacity constraint $T \in \mathbb{Q}^1_{\geq 0}$.}\\
      \textit{Task:}	&  \multicolumn{2}{l}{Find a packing of $\items$ into the smallest number of bins.}\\
  \end{tabularx}}
\end{center}
\else
\par
\fi
Another related problem is \acl{VC}, where we aim to \emph{cover} the containers.
We say a packing $\sigma \colon \vectors \to \NN_{>0}$ covers a container~$i$ if ${\sum\nolimits_{v_j \in \sigma^{-1}(i)} v_j^\ell } \geq T^\ell$ for each component $\ell \in \{1,  \dots, d\}$.
The objective is to find a packing $\sigma$ that maximizes the number of covered containers, that is, we want to maximize $|\{ i \in \NN \,|\, \sum\nolimits_{v \in \sigma^{-1}(i)} v_j^\ell  \geq T^\ell \, \text{ for all } \ell \in \{1,  \dots, d\} \}|$. 
\iflongversion
\begin{center}
  \framebox[1.0\textwidth]{
    \begin{tabularx}{0.97\textwidth}{rXl}
      \multicolumn{3}{X}{{\sc\centering \acl{pVC}} \hfill  \textit{Parameter:} Number $k$ of small vectors} \\[.5ex]
      \textit{Input:}      & \multicolumn{2}{l}{A set $\vectors = \{v_1, \dots v_n\}$ of $n$ vectors of dimension $d$ such that $v_j \in \mathbb{Q}^d_{\geq 0}$ for each }\\
      & \multicolumn{2}{l}{$v_j \in \vectors$, as well as capacity constraints $T \in \mathbb{Q}^d_{\geq 0}$.}\\
      \textit{Task:}	&  \multicolumn{2}{l}{Find a covering of $\vectors$ into the largest number of containers.}\\
  \end{tabularx}}
\end{center}
\fi
%
The last problem we study is the \acl{VMKP} problem: Here a packing into a finite number of $C$ many containers is sought.
Therefore, not all vectors may fit into them.
We have to choose which vectors we pack considering that each vector $v_j \in \vectors$ has an associated profit $\profit{v_j} \in \NN_{\geq 0}$.
A packing of the vectors is a mapping $\sigma \colon \vectors \to \{1, \dots, C\} \cup \{\bot\}$   such that  $\sum_{v_j \in \sigma^{-1}(i)} v_j^\ell \leq T^\ell$ holds for all $i \in \{1, \dots, C\}$ and $\ell \in \{1, \dots, d\}$, which means no container is over-packed.
The objective is to find a packing with a maximum total profit of the packed items, that is, we want to maximize $\sum_{i = 1}^C\sum_{v \in \sigma^{-1}(i)} \profit{v}$.
\iflongversion
\begin{center}
  \framebox[1.0\textwidth]{
    \begin{tabularx}{0.97\textwidth}{rXl}
      \multicolumn{2}{X}{{\sc\centering \acl{pVMKP}} \hfill  \textit{Parameter:} Number $k$ of small vectors \phantom{xx}} \\[.5ex]
      \textit{Input:}      & \multicolumn{2}{l}{A set $\vectors =
        \{v_1, \dots v_n\}$ with $v_i \in \mathbb{Q}^d_{\geq 0}$, a profit function $p\colon\vectors \to
        \NN_{\geq 0}$,}\\
      & \multicolumn{2}{l}{as well as capacity constraints $T \in \mathbb{Q}^d_{\geq 0}$ and a number of bins $C$.}\\
      \textit{Task:}	&  \multicolumn{2}{l}{Find a packing of $\vectors$ into the bins which maximizes the profit.}\\
  \end{tabularx}}
\end{center}
\fi


\subparagraph*{Conjoining and Over-the-Rainbow Matchings.}
We introduce two useful problems to tackle the questions mentioned above, namely \acl{pPOTRM} and \acl{pCM}.
\iflongversion
The following section presents the reductions from \acl{pVP}, \acl{pVC} and \acl{pVMKP} to \acl{pCM} using \acl{pPOTRM} as an
intermediate step.
By results of Gutin et al.~\cite{GutinWY2017}, and Marx and Pilipczuk~\cite{MarxP2014}, we can solve \acl{pCM} efficiently and, thus, our packing and covering problems as well.
\fi
A \emph{matching} in a graph $G$ describes a set of edges $M \subseteq E(G)$ without common nodes, that is, $e_1\cap e_2 =
\emptyset$ for all distinct $e_1, e_2 \in M$.
A matching is \emph{perfect} if it covers all nodes.
In the \acl{pPOTRM} problem, we are given an graph $G$ as well as
a \emph{color function} $\lambda \colon E(G) \rightarrow 2^{\colors}\setminus
\{\emptyset\}$ which assigns a non-empty set of colors to each edge, and an
integer $\ell$. For each edge $e$ and each color $c\in \lambda(e)$, there is a
non-negative weight $\gamma(e,c)$. 
The objective is to find a perfect matching $M$ and a surjective function $\xi
\colon M \rightarrow \colors$ with $\xi(e) \in \lambda(e)$ for each $e \in M$
such that $\sum_{e\in M}\gamma(e,\xi(e))\leq \ell$. 
The surjectivity guarantees that each color must appear at least once. 
We call such a pair $(M,\xi)$ a perfect \textit{over-the-rainbow matching} and
the term $\sum_{e\in
  M}\gamma(e,\xi(e))$ denotes its \emph{weight}. 
This name comes from the closely related \textit{rainbow matching} problem, where each color appears exactly once~\cite{KanoL2008,LeP2014}.
In contrast to our problem, a sought rainbow matching  covers as many colors as possible, but not necessarily all,  and the maximum size of a rainbow matching is bounded by the number of colors. In our variant we must cover all colors, and likely have to cover some colors more than once to get a perfect matching.
Formally, the problem is defined as follows:
\begin{center}
  \framebox[1.0\textwidth]{
    \begin{tabularx}{0.97\textwidth}{rXl}
      \multicolumn{2}{X}{{\sc\centering \acl{pPOTRM}} \hfill \textit{Parameter:} The number of colors $|\mathcal{C}|$ \phantom{x}} \\[.5ex]
      \textit{Input:}      & \multicolumn{2}{l}{A graph $G$, a set of colors $\colors = \{1, \dots, |\mathcal{C}|\}$, a function $\lambda \colon E \rightarrow 2^{\colors}\setminus \{\emptyset\}$,}\\
      & \multicolumn{2}{l}{edge weights $\gamma: \{(e,c)\mid e \in E(G), c \in \lambda(e)\} \to \mathbb{Q}_{\geq 0}$, and a number $\ell$}\\
      \textit{Task:}	&  \multicolumn{2}{l}{Find a perfect over-the-rainbow matching $(M,\xi)$ in $G$ of weight at most $\ell$.}\\
  \end{tabularx}}
\end{center}
We sometimes omit the surjective function $\xi$, if it is clear from the
context. 

Related to this problem is \acl{pCM}: We have a partition $V_1 \uplus \dots \uplus V_t$ of the nodes of $G$ and a pattern graph $H$ with $V(H)=\{V_1,\dots,V_t\}$.
Instead of covering all colors in a perfect matching, this problems asks to find a \emph{conjoining} matching~$M\subseteq E(G)$, which is a perfect matching such that for each $\{V_i,V_j\} \in E(H)$ there is an edge in $M$ with one node in $V_i$ and the other in $V_j$.
Roughly speaking, each edge in $H$ corresponds to some edges in $G$ of which at least one has to be taken by $M$. 
%
Formally, the problem is given by:
\begin{center}
  \framebox[1.0\textwidth]{
    \begin{tabularx}{0.97\textwidth}{rXl}
      \multicolumn{2}{X}{{\sc\centering \acl{pCM}} \hfill  \textit{Parameter:} The number of edges of $H$\phantom{x}} \\[.5ex]
      \textit{Input:}      & \multicolumn{2}{l}{A weighted graph $G = (V, E,\gamma)$ with $\gamma\colon E\rightarrow
\QQ_{\geq 0}$, a node partition $V_1 \uplus \dots \uplus V_t$,}\\
      & \multicolumn{2}{l}{a number $\ell$, and a graph $H$ with $V(H)=\{V_1,\dots,V_t\}$}\\
      \textit{Task:}	&  \multicolumn{2}{l}{Find a perfect matching $M$ in $G$ of weight at most $\ell$ such that}\\
      &  \multicolumn{2}{l}{for each edge $\{V_i,V_j\} \in E(H)$ there is an edge $\{u,v\}\in M$ with $u\in V_i$ and $v\in V_j$.}\\
  \end{tabularx}}
\end{center}
Gutin et al.~\cite[Theorem 7]{GutinWY2017} and Marx and Pilipczuk~\cite{MarxP2014} gave randomized fixed-parameter algorithms for \acl{pCM} on loop-free graphs $H$.
We show how a simple reduction also solves the problem on graphs with loops.
\begin{lemma}
\label{t:GutinCGM}
  The {\sc Conjoining Matching} problem can be solved by a randomized algorithm (with bounded false negative rate in $n+\ell$) in time
  \smash{$2^{|E(H)|}\cdot n^{O(1)}\cdot \ell^{O(1)}$}, even if~$H$ contains self-loops.
\end{lemma}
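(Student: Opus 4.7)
The plan is to reduce the case of a pattern graph $H$ with self-loops to the loop-free case handled by the algorithms of Gutin et al.\ and Marx--Pilipczuk. I aim to design a local gadget that replaces each self-loop of $H$ by a single ordinary (non-loop) edge in a new pattern graph $H'$, so that $|E(H')|=|E(H)|$ and so that the feasible conjoining matchings of the two instances are in weight-preserving bijection.

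For each self-loop $\{V_i,V_i\}\in E(H)$ I would proceed as follows. For every edge $e=\{u,v\}\in E(G)$ with both $u,v\in V_i$, delete $e$ from $G$ and introduce two fresh vertices $x_e,y_e$ together with the three edges $\{u,x_e\}$, $\{x_e,y_e\}$, $\{y_e,v\}$. Give $\{u,x_e\}$ weight $\gamma(e)$ and the other two new edges weight $0$. Collect all these new vertices $x_e,y_e$ (for all intra-$V_i$ edges $e$) into a fresh part $W_i$ of the node partition. In the pattern graph, replace the self-loop $\{V_i,V_i\}$ by the ordinary edge $\{V_i,W_i\}$; all other edges of $H$ are copied verbatim. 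The weight budget $\ell$ is kept unchanged.

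Correctness rests on the observation that each subdivided path $u\text{--}x_e\text{--}y_e\text{--}v$ admits in any perfect matching of $G'$ exactly two local patterns, since $x_e$ and $y_e$ both have degree~$2$: either $\{x_e,y_e\}\in M'$, which forces $u$ and $v$ to be matched outside the gadget (and contributes weight $0$, corresponding to not taking $e$), or $\{u,x_e\},\{y_e,v\}\in M'$, which corresponds to taking $e$ in the original matching and contributes weight $\gamma(e)$. This gives a weight-preserving bijection between perfect matchings of $G$ and perfect matchings of $G'$. Moreover, by the same observation, the $\{V_i,W_i\}$-requirement of $H'$ is met iff some edge of $M'$ crosses from $V_i$ to $W_i$, which happens iff some intra-$V_i$ edge of $G$ is taken in the original matching. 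This is exactly the self-loop requirement of $H$ at $V_i$. The non-loop edges of $H$ translate unchanged into constraints on $H'$.

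Feeding the resulting loop-free instance $(G',H',\ell)$ into the randomized algorithm of Gutin et al.\ or Marx--Pilipczuk thus solves the original instance, with one-sided error inherited from the black-box algorithm. Since $|E(H')|=|E(H)|$ and $|V(G')|\leq |V(G)|+2|E(G)|$ while weights and $\ell$ are preserved, the claimed runtime $2^{|E(H)|}\cdot n^{O(1)}\cdot \ell^{O(1)}$ follows. The main thing to check carefully is that the gadget neither introduces spurious matching patterns nor forces extra conjoining constraints on $W_i$: one has to verify that the degree-$2$ vertices $x_e,y_e$ really only admit the two advertised patterns, and that $W_i$ can always be perfectly matched internally (via the weight-$0$ edges $\{x_e,y_e\}$) whenever no intra-$V_i$ edge is used in the original matching.
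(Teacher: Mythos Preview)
Your reduction is correct and complete: the double-subdivision gadget is the standard trick for simulating an edge in perfect-matching instances, and your analysis that it yields a weight-preserving bijection and converts each self-loop constraint into an ordinary $\{V_i,W_i\}$ constraint is sound. The parameter $|E(H)|$ is preserved exactly, and the blow-up in $|V(G)|$ is polynomial, so the claimed running time follows.

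Your route, however, differs from the paper's. The paper uses a \emph{global layering} argument: it replaces every vertex $v\in V(G)$ by three copies $v',v'',v^{*}$, adds the edges $\{v',v^{*}\},\{v'',v^{*}\}$ of weight~$0$, and replaces every edge $\{v,w\}\in E(G)$ by $\{v',w''\}$ (after fixing an orientation). Correspondingly, each part $V_i$ of $H$ splits into two parts $V_i',V_i''$, plus one common part for all the $v^{*}$'s, and every edge of $H$ (loop or not) becomes the non-loop edge $\{V_i',V_j''\}$. The pigeonhole on $v^{*}$ forces exactly one of $v',v''$ to be matched into the ``real'' layer, recovering a perfect matching of $G$.

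Both constructions preserve $|E(H)|$ and are polynomial in $n$. Yours is more local---it touches only edges inside self-looped parts---whereas the paper's is uniform and independent of which edges of $H$ are loops. The paper's version has vertex blow-up $3|V(G)|$ regardless of $|E(G)|$, while yours has blow-up $|V(G)|+2|E(G)|$; neither matters for the asymptotic statement. Your approach has the minor advantage that non-loop constraints of $H$ are carried over verbatim, which makes the correctness argument for those constraints immediate.
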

\begin{proof}[Sketch of Proof.]
  If $H$ does not contain self-loops the claim is proven by Gutin et al.~\cite{GutinWY2017}.
  The case that $H$ does contain self-loops can be reduced to the loop-free version by a simple layering argument: First direct the edges of $H$ arbitrarily (for instance by using the lexicographical order of the nodes) and then define $G'$ and $H'$ as
  \begin{align*}
    V(H') &= \{\,h',h''\mid h\in V(H)\,\}\cup\{\,h^*\,\},\\
    E(H') &= \{\,\{h_i',h_j''\}\mid \{h_i,h_j\}\in E(H) \,\},\\
    V(G') &= \{\,v',v'',v^*\mid v\in V(G)\,\},\\
    E(G') &= \{\, \{v',v^*\},\{v'',v^*\}\mid v\in V(G)\,\}
            \cup\{\,\{v',w''\}\mid \{v,w\}\in E(G)\,\}.
  \end{align*}
  Observe that $H'$ is loop-free, and $|E(H)|=|E(H')|$.
  Further note that, in any perfect matching in $G'$, for each $v\in V(G)$ either $v'$ or $v''$ must be matched with $v^{*}$; the other node together with its matching partner corresponds to an edge in a corresponding perfect matching in $G$ as it is only connected to $v^*$ or $\{w',w''\mid \{v,w\}\in E(G)\}$.
  Finally, to preserve weights, set $\gamma'(\{v',v^*\})=\gamma'(\{v'',v^*\})=0$ and  $\gamma'(\{v',w''\}=\gamma(\{v,w\})$ for all $v,w\in V(G)$.
\end{proof}
      
\section{Reducing Packing and Covering Problems to Finding Perfect Over-the-Rainbow Matchings}
\label{sec:ReductionToOverTheRainbow}
\iflongversion
In this section, we prove \Cref{thm:fpt-packing-algorithms}, i.\,e., we show that \acl{pVP} and \acl{pVC} can be solved in time $4^k\cdot k!\cdot n^{O(1)}$ and the \acl{pMKP} problem in time $4^k\cdot k!\cdot n^{O(1)} \cdot (p_{\max})^{O(1)}$ respectively.
\else In this section, we prove \Cref{thm:fpt-packing-algorithms} relating to the \acl{pVP} problem, i.\,e., we show that it can be solved in time $4^k\cdot k!\cdot n^{O(1)}$.
Due to space restrictions we postpone the concrete transformations for the \acl{pVC} and the \acl{pMKP} problem to the appendix.
\fi
The first phase to solve these packing and covering problems is to interpret them as \acl{pPOTRM} problems.
Each problem admits a similar procedure: Guess the packing of the small vectors; guess the number of large vectors for each container; use these guesses to pack the large vectors by formulating the problem as a matching problem in a graph.
The idea is that the nodes of this graph represent the large vectors.
An edge represents that both endpoints can be placed into the same container to satisfy the condition of the problem, i.\,e., either to fit into the container or to cover it.
Introducing a color and a weight function for the edges, we manage to handle the containers already filled with some small vectors and the overall profits of the packing.
Note that the guessing also serves as a transformation from the minimization and maximization problems to decision problems as each guess also corresponds to some fixed number of containers and if applicable to the profit. So we ask if there is a solution with these numbers and thus we can solve this question via a reduction.

\subparagraph{Identifying the Set of Small Vectors}
Before we can proceed as mentioned above, we first need to identify the sets $\vectors_L$ and $\vectors_S$ of large and small vectors explicitly.
This can be done via a reduction to the \acl{3HS} problem as follows: The set of elements is given by the set of vectors $\vectors$ and we compute all sets $S\subseteq \vectors$ of triplets that fit together in a single container, i.\,e., $|S| = 3$ and $\sum_{v\in S} v \leq T$.
Consider a hitting set $H$ for this instance.
Then the set $\vectors\setminus H$ is large.
To see this, consider any three distinct vectors $u,v,w\in \vectors\setminus H$.
If we had $u+v+w \leq T$, then the set $\{u,v,w\}$ would be part of the computed selection of subsets.
Yet, $\{u,v,w\}\cap H = \emptyset$---a contradiction.
We can pick the given number of small vectors~$k$ and use \cref{t:HittingSet} to obtain a hitting set $H\subseteq\vectors$ of size at most $k$.
We set $\vectors_L = \vectors\setminus H$ and $\vectors_S = H$.
As there are $\cO(n^3)$ sets of triplets this yields a run time of $2.27^k \cdot n^{O(1)}$ (see \cref{t:HittingSet}).

\subparagraph*{The Case of Packing Vectors}
Recall that in the \acl{pVP} problem we are given $n$ vectors of dimension $d$ and a set of containers, each with the same size
limitation $T \in \QQ^d$.
Furthermore, we assume that the sets $\vectors_S$ and $\vectors_L$ are given explicitly by using the computation explained above.
Any solution needs at most $|\vectors|$ and at least $\lceil|\vectors_L|/2\rceil$ containers.
Furthermore, if there is a solution with $m\leq |\vectors|$ there is also a solution with $m'$ containers for any $m'\in\{m+1,\dots,|\vectors|\}$.
Thus a binary search for the optimal number of containers between the given bounds is possible.
Let $C$ be the current guess of the number of containers.
Now we have to decide whether there exists a solution using exactly $C$ containers.

We guess the packing of the small vectors, that is, we try all possible partitions into at most $\min\{C,k\}$ subsets. 
It is not hard to see that the number of such partitions is upper bounded by the $k^{\textnormal{th}}$ Bell number:
The first vector is packed by itself, the second can either be packed with the first one or also by itself, and so on.
If any of the corresponding containers is already over-packed, we discard the guess.
In the following, we call the used containers \emph{partially filled} as some area is already occupied by small vectors. 
For these partially filled containers, we guess which of them are finalized, i.\,e., which of them do not contain an additional large vector in the optimal solution, and discard them for the following steps.
There are at most~$2^k$ such guesses.
We denote the number of discarded containers as $C_0$.
For each of the remaining partially filled containers, we introduce a new color. 
Furthermore, we introduce a color $\top$ representing the empty containers if existent. 
Hence, the resulting set of colors $\colors$ has a cardinality of at most $k+1$.
For each $c \in \colors$, we denote by $s(c) \in \QQ^d$ the residual size in the corresponding container.

We place the large vectors $\vectors_L$ inside the $C-C_0$ residual containers by reducing it to a \acl{pPOTRM} problem.
Note that if the current guesses are correct, each of the $C-C_0$ containers receives at least one and at most two large vectors.
Hence, we may assume $|\vectors_L|/2 \leq (C-C_0) \leq |\vectors_L|$ (and reject the current guess otherwise).
Furthermore, the number of containers receiving one or two large items, respectively, is already determined by~$C$ and $C_0$.
We denote these numbers by $C_1$ and $C_2$ and remark that $C_2 = |\vectors_L| - (C-C_0)\geq 0$ and $C_1 := (C-C_0) - C_2 = 2(C-C_0) - |\vectors_L| \geq 0$.

We now construct a graph $G = (V, E)$ to find a feasible packing.
Every large vector $v\in \vectors_{L}$ is represented by two nodes $v$ and $v'$ in $V$.
Let $\vectors'_{L}=\{v'\mid v\in \vectors_{L}\}$.
Next, we define a set $\mathcal{B}$ of $2\cdot C_2$ new nodes called \emph{blocker nodes}, which ensures that all vectors are placed inside exactly $(C-C_0)$ containers.
We define $V := \vectors_L \cup \vectors_L' \cup \mathcal{B}$. In this graph, an edge between the nodes in $\vectors_L \cup \vectors_L'$ represents a possible packing of the large vectors inside one container.
Hence, we add an edge $e = \{v,w\}$ between two original vectors $v,w \in \vectors_L$ and assign this edge some color $c \in \colors$ if these vectors fit together inside the corresponding container.
Furthermore, we add an edge between a vector $v \in \vectors_L$ and its copy $v' \in \vectors_L'$ and assign it the color $c \in \colors$ if the vector alone fits inside the corresponding container.
More formally, we introduce the set of edges $E_c := \{\{u, v\}\, | \, u,v \in \vectors_L, u + v \leq s(c)\} \cup \{\{v,v'\}\,|\, v \in \vectors_L, v \leq s(c)\}$ for each color $c \in \colors$.
Additionally, we introduce the edges of a complete bipartite graph between the copied nodes~$\vectors_L'$ on the one hand and the blocker nodes $\mathcal{B}$ on the other hand.
More formally, we define $E_{\bot} := \{\{v',b\} \,|\, v' \in \vectors_L', b \in \mathcal{B}\}$.
Together, we get $E := E_{\bot} \cup \bigcup_{c \in \colors} E_c$.
\begin{figure}
  \centering    
  \def\firstColor#1{\textcolor{blue!75!black}{#1}}
  \def\secondColor#1{\textcolor{orange!75!black}{#1}}
  \begin{tikzpicture}[scale=0.50,
    dot/.style={circle,semithick, draw=black,fill=black,inner sep=1pt,outer sep=0pt},
    every label/.style={label distance=0.01mm},
    mybox/.style = {
        draw, semithick,
        rounded corners, densely dashed,
        inner sep=0pt, minimum width=5cm,
        minimum height=0.85 cm, color=darkgray
      }
    ]   
    \node[dot, label={145:$0{.}3$}] (p3) at (0,0)  {};
    \node[dot, label={75:$0{.}4$}] (p4) at (3cm,0) {};
    \node[dot, label={45:$0{.}9$}] (p9) at (6cm,0) {};
    \node (p) at (9cm,3mm) {$\vectors_{L}$};
    \node[dot, fill=white, label={185:$0{.}3$}] (q3) at (0,-3cm) {};
    \node[dot, fill=white, label={180:$0{.}4$}] (q4) at (3cm,-3cm) {};
    \node[dot, fill=white, label={0:$0{.}9$}] (q9) at (6cm,-3cm) {};
    \node (q) at (9cm,-3cm-3mm) {$\vectors'_{L}$};
    
    \node[dot, label={190:$b_{1}$}] (b1) at (1.5cm,-5cm) {};
    \node[dot, label={-10:$b_{2}$}] (b2) at (4.5cm,-5cm) {};
    \node (B) at (9cm,-5.4cm) {$\mathcal{B}$};
    
    \foreach \q in {3,4,9}{
      \foreach \i in {1,2}{
        \draw[semithick] (q\q) to (b\i);
      }
    }

    \node[mybox] at ($(p4.center)+(0,3mm)$) {};    
    \node[mybox] at ($(q4.center)+(0,-2mm)$) {};    
    \node[mybox] at (3cm,-5.4cm) {};    

    \draw[semithick] (p3) -- (q3) node[midway,left]  {\footnotesize$\{\firstColor{1},\secondColor{2}\}$};
    \draw[semithick] (p4) -- (q4) node[midway,left]  {\footnotesize$\{\firstColor{1},\secondColor{2}\}$};
    \draw[semithick] (p9) -- (q9) node[midway,left]  {\footnotesize$\{\firstColor{1}\}$};

    \draw[semithick] (p3) -- (p4) node[midway,above]  {\footnotesize$\{\firstColor{1},\secondColor{2}\}$};
  \end{tikzpicture}
    \caption{Construction of the graph $G$ for a \acl{BP} instance with sets $\vectors_{S}=\{0{.}1, 0{.}15, 0{.}2\}$ and $\vectors_{L}=\{0{.}3, 0{.}4, 0{.}9\}$.
    The guessed number of bins is $C=3$.
    All small items are packed separately and the bin containing $0{.}15$ is finalized ($C_{0}=1$).
    There is thus a bin containing $0{.}1$ associated with \firstColor{color $1$} (the first value in the braces) and a bin
    containing $0{.}2$ associated with \secondColor{color $2$} (the second value in the braces).
    The color $\bot$ used between all nodes of $\vectors'_{L}$  and all nodes of $\mathcal{B}$ are omitted. 
  }\label{fig:example}
\end{figure}
Finally, we define the color function $\lambda$ with $\lambda \colon E \rightarrow 2^{\colors\cup \{\bot\}}$, such that each edge in $E_c$ gets color $c$ for each $c \in \colors' :=\colors\cup \{\bot\}$.
More formally, we define $\lambda(e) := \{c \in \colors \cup \{\bot\}\,|\, e \in E_{c}\}$.
See \Cref{fig:example} for an example of the construction.
Note that the weights on the edges are irrelevant in this case and can be set to
one, i.\,e.~$\gamma(e,c)=1$.
To finalize the reduction, we have to define the size $\ell$ of the matching we are looking for.
We aim to find a perfect matching and hence are searching for a matching of size $\ell := |\vectors_L| + C_2$. Note that if $C_2 = 0$ and therefore no blocker nodes are introduced, we also remove the color $\bot$ from the set of colors. 

\begin{lemma}
\label{l:ReductionVC}
  There is a packing of the large vectors $\vectors_L$ inside $(C-C_0)$ containers such that each container holds at least one large vector if and only if the above described instance for {\sc Perfect Over-The-Rainbow-Matching} is a ``yes''-instance.
\end{lemma}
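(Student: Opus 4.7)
The plan is to prove both directions of the equivalence separately; the reverse direction carries the main technical subtlety.

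For the forward direction, given a valid packing of $\vectors_L$ into $C-C_0$ containers, each holding at least one large vector, I would first observe that $3$-incompatibility of $\vectors_L$ forces every container to contain exactly one or two large vectors, so the counts agree with $C_1$ and $C_2$. I would then translate the packing directly into a matching: for each container holding two large vectors $u,v$ I add the edge $\{u,v\}$ to $M$ with color equal to that of the container; for each container holding a single large vector $v$ I add $\{v,v'\}$ to $M$ with the container's color; and I match the $2C_2$ still-uncovered copy nodes arbitrarily with the $2C_2$ blocker nodes via $\bot$-colored edges. Feasibility of the packing places the first two kinds of edges in the appropriate $E_c$, and surjectivity of $\xi$ is automatic because every partially-filled container's color appears, $\top$ appears once per empty container (if any), and $\bot$ appears on every blocker edge (if any).

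For the converse I would start from a feasible POTRM $(M,\xi)$ and read off a packing. A short counting argument, using that every node of $\vectors_L$ and $\vectors_L'$ must be matched and that the blocker nodes in $\mathcal{B}$ can only pair with copies, forces exactly $C_2$ big-big edges, $C_1$ big-copy edges, and $2C_2$ copy-blocker edges in $M$; hence $M$ carries exactly $C-C_0$ container-colored edges. The obvious packing assigns each container-colored edge's large vectors to the container with its color. The hard part, and the main obstacle, is that POTRM requires only surjectivity of $\xi$, so a priori several edges could share the same partially-filled container color, which would force more than two large vectors into a single container and overrun its capacity.

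I would circumvent this by normalising $\xi$ before extracting the packing. Let $p$ be the number of partially-filled containers and $n_\top$ the number of empty ones, so $p+n_\top=C-C_0$, and each of the $p$ partially-filled colors together with $\top$ (when $n_\top\geq 1$) must be used at least once. If some partially-filled color $c$ has $|\xi^{-1}(c)|\geq 2$, then $\sum_{c'}|\xi^{-1}(c')|=p+n_\top$ forces $|\xi^{-1}(\top)|<n_\top$, so $\top$ can accommodate one more edge. Crucially, any edge $e$ with $\xi(e)=c$ has its vectors fitting within $s(c)\leq T=s(\top)$, so $\top\in\lambda(e)$ and the reassignment $\xi(e):=\top$ is legal. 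Iterating this swap terminates with every partially-filled color used exactly once and $\top$ used exactly $n_\top$ times. Assigning each partially-filled container its unique edge and distributing the $n_\top$ edges of color $\top$ bijectively among the $n_\top$ empty containers now yields a valid packing: membership of each chosen edge in the corresponding $E_c$ (or $E_\top$) witnesses the capacity constraint, and every container receives at least one large vector, as required.
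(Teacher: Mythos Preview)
Your proposal is correct and follows essentially the same approach as the paper: translate a packing directly into a matching for the forward direction, and in the converse count edges to see that exactly $C-C_0$ matching edges carry colors in $\colors$, then redirect any duplicated partially-filled color to an empty container. Your treatment of the converse is in fact more careful than the paper's: the paper simply says ``if a color $c\in\colors\setminus\{\top\}$ appears more than once, we use an empty container'' without checking that an empty container is available, whereas you make this explicit via the counting argument $\sum_{c'}|\xi^{-1}(c')|=p+n_\top$ and the observation $E_c\subseteq E_\top$.
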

\begin{proof}
  Assume there is a packing of the vectors $\vectors_L$ inside $(C-C_0)$ containers such that each container holds at least one large vector.
  In this case, we can construct a perfect over-the-rainbow matching $M$ as follows.
  For each pair of vectors $v,w \in \vectors_L$ that is assigned to the same container, we choose the corresponding edge $\{v,w\}$ for the matching and assign it the corresponding color $c \in \colors$.
  For each vector $v  \in \vectors_L$ that is the only large vector in its container, we choose the edge $\{v,v'\}$ for the matching and assign it the corresponding color $c \in \colors$. 
  To this point all the vectors in $\vectors_L$ are covered by exactly one matching edge since each of them is contained in exactly one container.
	
  Note that in the given packing there have to be exactly $C_1 = 2(C-C_0) - |\vectors_L| $ containers with exactly one large vector and $C_2 = |\vectors_L| - (C-C_0)$ containers with exactly two large vectors.
  As a consequence, there are exactly $2 \cdot C_2$ nodes in $\vectors_L'$ that are not yet covered by a matching edge since their originals are covered by edges between each other.
  For each of these nodes, we choose an individual node from the set $\mathcal{B}$ and define the edge between these nodes as a matching edge and assign it the color $\bot$. 
  Since there are exactly $2 \cdot C_2$ blocker nodes, we cover all nodes in $V$ with matching edges and hence we have constructed a perfect matching.
  Each color $c \in \colors \setminus \{\top\}$ is represented by one partially filled container and hence each has to appear in the matching.
  Moreover, if the color $\top$ was introduced, that is, there were less than $C-C_0$ containers partially covered by small vectors, then there was a container exclusively containing large vectors and hence $\top$ was used in the matching as well.
  Therefore, we indeed constructed a perfect over-the-rainbow matching.

  Conversely, assume that we are given a perfect over-the-rainbow matching $M$.
  Consequently, each vector in $\vectors_L$ is covered by exactly one matching edge. 
  As $M$ contains at most $|\vectors_L| + C_2$ edges, and $2 \cdot C_2$ edges are needed to cover the nodes in $\mathcal{B}$, there are exactly $|\vectors_L| - C_2 = (C-C_{0})$ matching edges containing the nodes from $\vectors_L$.
  As in $M$ each color is present, we can represent each container by such a matching edge and place the corresponding vector or vectors inside corresponding containers.
  If a color $c \in \colors \setminus \{\top\}$ appears more than once, we use an empty container for the corresponding large vectors.
\end{proof}

To decide if there is a packing into at most $C$ containers, we find a partition of the $k$ small vectors with $\cO(k!)$ guesses, and the to-be-discarded containers with $\cO(2^k)$ guesses.
Constructing the graph~$G$ needs $\cO(n^2k)$ operations.
By \Cref{thm:over-the-rainbow-algo}, a perfect over-the-rainbow matching over $k+O(1)$ colors with weight $\ell \in O(n)$ can be computed in time $2^k\cdot n^{O(1)}$.
To find the correct $C$ we call the above algorithm in binary search fashion $\cO(\log(n))$ times, as we need at most~$n$ containers.
This results in a run time of $2^{2k}\cdot k!\cdot n^{O(1)}= 4^k\cdot k!\cdot n^{O(1)}$.

\iflongversion
\subparagraph*{The Case of Covering Vectors}
Recall that in the \acl{pVC} problem, we are given $n$ vectors $\vectors$ of dimension $d$ and a set of containers, each with the same size limitation $T \in \QQ^d$. 
Further, we are given a partition of the vectors $\vectors$ into the set $\vectors_L$ of large vectors and and the set $\vectors_S$ of small ones.
The large vectors have the property that every subset of three vectors cover a container.
On the other hand, we need at least two vectors to cover one container (otherwise, we can remove the corresponding vectors from the instance and only consider the residual instance).
Hence an optimal solution covers at least $\lfloor \vectors_L/3 \rfloor$ containers, while it can cover at most $\lfloor \vectors/2 \rfloor$ containers.
Remark that for this problem we can search for the optimal number of covered containers in binary search fashion between the bounds $\lfloor \vectors_L/3 \rfloor$ and $\lfloor \vectors/2 \rfloor$: 
A solution covering a given number of containers can be transformed into a solution covering one less container.
In the following, we assume we are given the number $C$ of containers to be covered and have to decide whether this is possible or not.
Furthermore, note that each solution which contains multiple partially covered containers can be transformed into a solution where each container is completely covered, by distributing the vectors from the non-covered containers to the covered containers. 
Clearly, this might empty some containers completely. 

Similar as for \acl{pVP}, we first guess the distribution of small vectors $\vectors_S$ to the (at most $C$) containers.
Each distribution of these vectors affects at most $k$ containers, as $|\vectors_S| = k$.
Since the order of the containers is irrelevant, there are at most $\cO(k!)$ distinct possibilities to distribute these small vectors.

In the next step, we remove all containers that are completely filled by now.
This leaves $C'\leq C$ containers we have to cover.
Let $C_{p}\leq C'$ be the number of those containers that contain a small vector from $\vectors_{S}$.
We call these containers \emph{partially covered}.
As in the algorithm for the \acl{pVP} problem, we introduce a set of colors $\colors$ such that each partially covered container is represented by one color $c \in \colors$ and the empty container is represented by one color called $\top$.
On the one hand, as we have to distribute the vectors in $\vectors_L$ to $C'$ containers, there are at least $|\vectors_L| - 2\cdot C'$ containers with more than two large vectors.
On the other hand, there are $C_P$ partially covered containers, and they might need only one vector to be covered while all others need at least two vectors to be covered. 
Hence, the number of containers admitting more than two vectors is bounded by $|\vectors_L| - 2\cdot (C'-C_P) -C_P = |\vectors_L| - 2\cdot C'+C_P$.
Note that a container with more than three large vectors stays covered if one of the large vectors is removed. 
Hence, we can guarantee that if containers with at most two vectors exists, there are no containers with more than three large vectors. 

In the next step, we guess the number $C_1$ of containers with only one large vector in the optimal solution.
As a consequence, there are exactly $|\vectors_L| -C_1 $ large vectors that have to be placed inside containers with more than one large vector and $C' - C_1$ such containers.
Consequently, there are exactly $C_3 := (|\vectors_L| -C_1 ) - 2(C'-C_1)$ containers with three large vectors and hence $C_2 := C' - C_1 -C_3$ container with exactly two large vectors.
Note that for this guess there are at most $\cO(k)$ options since only partially filled containers can be covered by only one large vector.

Knowing these values, we construct a graph $G=(V,E)$ for a \acl{pPOTRM} problem as follows.
Similar as above the set of nodes is a combination of nodes generated for the large vectors and some blocker nodes.
Again each vector $v \in \vectors_{L}$ has a node in the graph as well as a copy of itself $v' \in \vectors_{L}'$. 
Furthermore, we introduce two sets of blocker nodes $\mathcal{B}_1$ and $\mathcal{B}_2$ such that $|\mathcal{B}_1| = |\vectors_{L}| -  C_1$ and $|\mathcal{B}_2| = |\vectors_{L}| -  (C_1 + 2\cdot C_2)$.

For each pair of nodes $v,w \in \vectors_{L}$, we introduce one edge $e = \{v,w\}$ and assign the color~$c$ to it if these two vectors together cover the corresponding container that includes the small vector associated with color $c$. 
Similarly, we introduce an edge $e = \{v,v'\}$ between a vector $v \in \vectors_{L}$ and its copy $v' \in \vectors_{L}'$ and assign it with the color $c \in \colors$ if this vector alone covers the corresponding container.
More precisely, we introduce for each $c \in \colors$ the edge set  $E_c := \{\{u,v\}\, | \, u,v \in \vectors_L, s(u) + s(v) \geq s(c)\} \cup \{\{v,v'\}\,|\, v \in \vectors_L, v+v' \geq s(c)\}$.
Additionally, we introduce all edges between the copied nodes $\vectors_L'$ and the blocker nodes~$\mathcal{B}_1$ to ensure that exactly $C_1$ vectors are placed alone inside their containers.
Furthermore, we introduce all edges between the blocker nodes $\mathcal{B}_2$ and the vector nodes $\vectors_L$, to ensure that exactly $C_1 + 2 \cdot C_2$ vectors are placed inside the containers.
More formally, we define $E_{\bot} := \{\{v',b\} \,|\, v' \in \vectors_L', b \in \mathcal{B}_1\} \cup \{\{v,b\} \,|\, v \in \vectors_L, b \in \mathcal{B}_2\}$ where $\bot \not \in \colors$. 
Together, we get $E := E_{\bot} \cup \bigcup_{c \in \colors} E_c$.

Finally, we have to define the color function $\lambda$, the weight function $\gamma$ and the maximal weight of the matching $\ell$.
We define $\lambda \colon E \to 2^{\colors\cup \{\bot\}}, e \mapsto \{c \,|\, c \in \colors \cup \{\bot\}, e \in E_c  \}$ and
\begin{equation*} 
  \gamma(e,c) = \begin{cases} 1, & \text{if } c \in\colors \setminus \{\bot,\top\},\\ 0, &\text{otherwise,} \end{cases}
\end{equation*}
for all $e = \{v,w\} \in E$ and $c \in \colors$ with $c \in \lambda(e)$.
We want to allow each color for a partially filled container to be taken at most once and, hence, search for a matching with weight at most $\ell := |\colors \setminus \{\bot,\top\}|$.


\begin{lemma}
  There is a covering for the $C'$ containers using the large vectors $\vectors_L$ such that there are exactly $C_1$ containers with one large vector and $C_2$ containers with two large vectors if and only if the above described graph $G$ has a perfect over-the-rainbow matching $M$ of weight $\ell = |\colors \setminus \{\bot,\top\}|$.
\end{lemma}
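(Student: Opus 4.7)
The plan is to mirror the strategy used for the packing case (\Cref{l:ReductionVC}), adapted to accommodate the three possible container multiplicities---singleton, pair, and triple---and the two blocker sets $\mathcal{B}_1$, $\mathcal{B}_2$. I would establish the two implications separately.

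For the forward direction, assume a valid covering exists with $C_1$ singleton, $C_2$ pair, and $C_3 = C' - C_1 - C_2$ triple containers. For each singleton container covered by large vector $v$ with color $c$, take the edge $\{v, v'\}$ labelled with $c$; for each pair container with vectors $\{v, w\}$ and color $c$, take $\{v, w\}$ labelled with $c$; and for each triple container $\{u, v, w\}$ with color $c$, pick one pair, say $\{u, v\}$, as the colour-$c$ representative edge, and route the third vector $w$ to a blocker in $\mathcal{B}_2$ via a $\bot$-labelled edge. The copies in $\vectors_L'$ that remain unmatched, i.e., those belonging to vectors in pair or triple containers, are then matched to blocker nodes in $\mathcal{B}_1$ via $\bot$-edges. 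Counting confirms $|\mathcal{B}_2| = 3C_3$ matches the number of ``third'' vectors across all triples and $|\mathcal{B}_1| = 2C_2 + 3C_3$ matches the unmatched copies, yielding a perfect matching. Each partial color $c \in \colors \setminus \{\bot, \top\}$ appears exactly once via its container-representative edge, giving total weight~$\ell$.

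For the reverse direction, start from a perfect over-the-rainbow matching $(M, \xi)$ of weight at most $\ell$. By surjectivity of $\xi$ each partial color contributes weight at least~$1$, and the bound $\ell = |\colors \setminus \{\bot, \top\}|$ forces each partial color to appear exactly once. I then interpret each $c$-colored edge $\{v, w\}$ (or $\{v, v'\}$) as assigning the corresponding vectors to the container associated with color~$c$; edges coloured~$\top$ populate empty containers analogously. Vectors matched to $\mathcal{B}_2$ via $\bot$-edges are distributed among the triple containers---three per container, totalling $3 C_3$ as required---while copies matched to $\mathcal{B}_1$ make no assignment. Routine counting using $|\mathcal{B}_1|$ and $|\mathcal{B}_2|$ shows that the resulting covering has exactly $C_1$, $C_2$, and $C_3$ containers of the respective multiplicities, and that the container coverage conditions hold by construction of the edge sets $E_c$.

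The main obstacle I anticipate is the triple-container case when no pair of its three large vectors individually attains the residual $s(c)$, so no edge within the triple carries colour~$c$, and the naive representative edge does not exist. Resolving this will require exploiting the property that any three large vectors cover any full container, and hence a fortiori any partial residual, together with a careful interplay between blocker assignments and colour choices: for instance, allowing one edge of a triple to be labelled~$\top$ while the colour-$c$ representative is drawn from a different triple sharing the same residual bound. This argument is analogous to, yet more delicate than, the one in the packing case, and is likely the crux of the proof.
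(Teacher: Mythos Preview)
Your forward direction miscounts and does not produce a perfect matching. You let each triple container $\{u,v,w\}$ contribute one coloured edge $\{u,v\}$ and send only the single leftover vector $w$ to $\mathcal{B}_2$. That routes $C_3$ vectors to $\mathcal{B}_2$ in total, but by construction $|\mathcal{B}_2|=|\vectors_L|-(C_1+2C_2)=3C_3$, so $2C_3$ blocker nodes in $\mathcal{B}_2$ are left unmatched. Your own reverse direction already uses the correct accounting (``three per container, totalling $3C_3$''), so the two halves of your plan are incompatible; indeed your forward scheme produces $C_1+C_2+C_3$ coloured edges, whereas any perfect matching of the given graph has exactly $C_1+C_2$ coloured edges once the $|\mathcal{B}_1|+|\mathcal{B}_2|$ forced $\bot$-edges are removed.

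The paper treats triples entirely differently: a triple container contributes \emph{no} coloured matching edge at all. In the forward direction only the $C_1$ singleton containers yield edges $\{v,v'\}$ and the $C_2$ pair containers yield edges $\{v,w\}$; every one of the $3C_3$ large vectors lying in triple containers is matched by a $\bot$-edge to a node of $\mathcal{B}_2$, and its copy in $\vectors_L'$ is absorbed by $\mathcal{B}_1$. In the reverse direction the $3C_3$ vectors matched into $\mathcal{B}_2$ are grouped arbitrarily into threes and placed into the remaining $C_3$ containers, which is legitimate because \emph{any} three large vectors cover \emph{any} container. With this viewpoint the ``crux'' you anticipate---that no pair inside a particular triple might attain the residual $s(c)$---does not arise, because a triple is never asked to supply a colour-$c$ edge; the weight bound $\ell=|\colors\setminus\{\bot,\top\}|$ then pins each partially-filled colour to exactly one of the $C_1+C_2$ coloured edges.
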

\begin{proof}	
  Assume there is a packing of the vectors $\vectors_L$ inside the $C'$ container, such that there are exactly $C_1$ containers with one large vector and $C_2$ containers with two large vectors.
  In this case, we can construct a perfect over-the-rainbow matching $M$ as follows.
  For each pair of vectors $v,w \in \vectors_L$ that is assigned to the same container, we choose the corresponding edge $\{v,w\}$ for the matching and assign it the corresponding color.
  For each vector $v  \in \vectors_L$ that is the only large vector in its container, we choose the edge $\{v,v'\}$ for the matching and assign it the corresponding color. 
  To this point exactly $C_1$ nodes from the set $\vectors_L'$ are covered by the matching and exactly $C_1 + 2\cdot C_2$ nodes from the set $\vectors_L$ are covered by matching edges.	

  As a consequence there are exactly $|\vectors_L| - C_1$ nodes in $\vectors_L'$ that still need to be covered and exactly $|\vectors_L| - (C_1 + 2 \cdot C_2)$ nodes in $\vectors_L$ that need to be covered.
  For each of the $|\vectors_L| - C_1$ nodes in $\vectors_L'$, we choose one individual node from the $|\vectors_L| - C_1$ nodes in $\mathcal{B}_1$ arbitrarily, add the corresponding edge to the matching $M$ and assign it the color $\bot$.
  For the $|\vectors_L| - (C_1 + 2 \cdot C_2)$ nodes in $\vectors_L$, we choose one individual node from the $|\vectors_L| - (C_1 + 2 \cdot C_2)$ nodes in $\mathcal{B}_2$ arbitrarily and add the corresponding edge to the matching $M$ and assign it the color $\bot$.
	
  Obviously $M$ covers all the nodes in the graph and is a matching and hence $M$ is a perfect matching.
  Furthermore, each color $\colors \setminus \{\bot,\top\}$ is chosen exactly once while the color $\top$ is chosen at least once.
  Since the edges with colors in $\colors \setminus \{\bot,\top\}$ have a weight of $1$ while all other edges have a weight of $0$, $M$ has a weight of exactly  $|\colors \setminus \{\bot,\top\}| = \ell$.
	
  On the other hand, assume that we are given a perfect over-the-rainbow matching $M$ for $G$ with weight exactly $\ell = |\colors \setminus \{\bot,\top\}|$.
  Since all the colors have to appear at least once, and each edge with color in $\colors \setminus \{\bot,\top\}$ has a weight of exactly $1$ each of these colors can appear at most once.
  Since $M$ is a perfect matching, each vector in $\vectors_L$ is covered by exactly one matching edge and $M$ contains exactly $|V|/2 = 2\dot |\vectors_L| - (C_1 + C_2)$ edges.
  Exactly $|\mathcal{B}_1| = |\vectors_L| - C_1$ nodes in $\vectors_L'$ are contained in edges between $\vectors_L'$ and $\mathcal{B}_1$, since all the neighbors of nodes in $\mathcal{B}_1$ can be found in $\vectors_L'$.
  Hence, there are exactly $|\vectors_L'| - |\mathcal{B}_1| = C_1$ edges with colors $c \in \colors$ in $M$ that are between the nodes in $\vectors_L$ and their corresponding copies in~$\vectors_L'$.
  We place the corresponding vectors alone in the container corresponding to the color of the edge.  
  The remaining $|\vectors_L| - C_1$ nodes in $\vectors_L$ are covered by edges between each other or by edges to the set $\mathcal{B}_2$. 
  Since $|\mathcal{B}_2| = |\vectors_L| - (C_1 + 2 \cdot C_2)$ and these nodes only have neighbors in the set $\vectors_L$ there are exactly $|\vectors_L| - (C_1 + 2 \cdot C_2)$ nodes in $\vectors_L$ that share a matching edge with a node in $\mathcal{B}_2$. 
  Hence the remaining $(|\vectors_L| - C_1) - (|\vectors_L| - (C_1 + 2 \cdot C_2)) = 2 \cdot C_2$ nodes have to be paired by matching edges.
  We place these vectors pairwise inside the corresponding containers.
  The residual vectors are distributed in groups of three and place inside the residual $C' - C_1- C_2$ container.
  Since each of the colors for the partially filled containers appears exactly once in the matching, all the containers are covered by this assignment.
\end{proof} 

In the following, we summarize the run time of the above described algorithm to decide whether there is a packing into at most
$C$ containers.
Finding the correct partition of the $k$ small vectors can be done in $\cO(k!)$ guesses.
Finding the number of containers with at most one large vector can be done in $\cO(k)$ guesses.
Finally, the construction of the graph $G$ needs at most $\cO(n^2k)$ operations.
Hence the run time can be summarized as $2^k\cdot k!\cdot n^{O(1)}\leq 4^k\cdot k!\cdot n^{O(1)}$.
Lastly, to find the correct $C$ we have to call the above algorithm in binary search fashion at most $\cO(\log(n))$ times, since we can cover at most $n$ containers.


\subparagraph*{The Case of Packing Vectors with Profits}
Recall that in the \acl{pVMKP} problem, we are given a set $\vectors$ of $n$ vectors with dimension $d$, a profit function $p\colon \vectors \to \NN_{\geq 0}$ and $C$ containers each with capacity constraint $T \in \QQ^d$.
Furthermore, we are given a partition of the vectors $\vectors$ into small $\vectors_{S}$ and large $\vectors_{L}$.

Again, we guess the distribution of the small vectors. 
However, since it might not be optimal to place all the small vectors, we first have to guess which subset of them is chosen in the optimal solution. 
There are at most $2^{k}\cdot k!$ possibilities for both guesses.
After this step, we have at most $k$ containers which are partially filled with small vectors.

In the next step, we guess for each partially filled containers whether they contain an additional large vector and discard the containers that do not.
There are at most $2^k$ possible choices for this.
Let $C_0$ be the number of such discarded containers. 
This step leaves $C-C_0$ containers for the large vectors in $\vectors_{L}$.
Again, we define a color for each remaining partially filled container and one color $\top$ for the empty containers resulting in a set $\colors$ of at most $k+1$ colors.

Similar as for the problems \acl{pVP} and \acl{pVC}, we construct a graph $G = (V,E)$ to find the profit maximal packing.
We introduce one node for each vector in $v \in \vectors_{L}$ and a node for its copy $v' \in \vectors_{L}'$.
Furthermore, we introduce a set $\mathcal{B}$ of $2\cdot |\vectors_{L}| - 2 \cdot (C-C_0)$ blocker nodes to ensure that we use exactly $C-C_0$ container.
We define a profit of zero for the copy nodes and the blocker nodes, while the nodes for the original vectors $v \in \vectors_{L}$ have profit $p(v)$.

We add an edge between two nodes $v,w \in \vectors_{L}$ and assign it the color $c \in \colors$ if the vectors together fit inside the corresponding container assigned with color $c$.
Furthermore, we add an edge between a node $v \in \vectors_{L}$ and its copy $v' \in \vectors_{L}'$ and assign it the color $c  \in \colors$ if it fits alone inside the corresponding container.
More formally, we define for each color $c \in \colors$ the set $E_c := \{\{u, v\}\, | \, u,v \in \vectors_L, s(u) + s(v) \leq s(c)\} \cup \{\{v,v'\}\,|\, v \in \vectors_L, v+v' \leq s(c)\}$.
Finally, we connect each node from the set $\mathcal{B}$ with each node from the set $\vectors_L \cup \vectors_L'$, i.e, we define $E_{\bot} := \{\{v,b\}\,|\, v \in \vectors_L \cup \vectors_L', b \in \mathcal{B}\}$. 
In total, we set $E := E_{\bot} \cup \bigcup_{c \in \colors} E_c$.

Finally, we define the color function $\lambda$ and the profit function $\gamma$.
For this purpose we denote $p_{\max} := \max\{p(v) | v \in \vectors_L\} $ and define $\lambda \colon E \rightarrow 2^{\colors\cup \{\bot\}}, e \mapsto \{c\,|\, c \in \colors \cup \{\bot\}, e \in E_c \}$ as well as
\begin{equation*}
  \gamma(\{v,w\},c) := \begin{cases}
                         2p_{\max} - (p(v) +p(w)), & v,w \in \vectors_L \cup \vectors_L',\\
                         0, & \text{otherwise,}
                       \end{cases}
\end{equation*}
for all $e = \{v,w\} \in E$ and $c \in \colors$ with $c \in \lambda(e)$.
Obviously, all the weights are non-negative.

%

\iflongversion
\begin{lemma}
\else
\begin{lemma}[$\star$]
\fi
  There is a packing of the large vectors inside the corresponding $C-C_0$ containers with profit at least $p$ if and only if there is a perfect over-the-rainbow matching $M$ in $G$ with weight at most $(C-C_0)p_{\max} - p$.
\end{lemma}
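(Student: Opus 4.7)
The plan is to establish a tight correspondence between feasible packings of $\vectors_L$ into the $C - C_0$ remaining containers and perfect over-the-rainbow matchings of the constructed graph~$G$, under which the matching weight is an affine function of the packing's profit. The argument follows the template of \Cref{l:ReductionVC}, but trades cardinality bookkeeping for weight accounting.

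First I would record the rigidity of $G$: in any perfect matching $M$, each blocker in $\mathcal{B}$ must be matched via a $\bot$-colored edge (blockers are incident only to edges in $E_\bot$), so exactly $|\mathcal{B}| = 2|\vectors_L| - 2(C-C_0)$ edges are $\bot$-colored and the remaining $|V|/2 - |\mathcal{B}| = C - C_0$ edges carry colors from $\colors$. For the forward direction, given a packing with profit at least $p$, I would process the $C - C_0$ non-discarded containers one by one: a container holding a single large vector $v$ contributes the edge $\{v, v'\}$; a container holding two large vectors $v, w$ contributes the edge $\{v, w\}$. Each such edge lies in $E_c$ for $c$ the color of its container (the unique color of a partially filled container, or $\top$ for an empty one), so assign $\xi$ accordingly. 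All remaining large vectors and dangling copies are absorbed into blockers via $\bot$-edges; the blocker count was chosen precisely so this matching is perfect, and surjectivity of $\xi$ holds because every non-discarded partially filled container carries a unique color that is used. Computing weights edge-by-edge and using $p(v') = 0$, each of the $C - C_0$ colored edges contributes $2 p_{\max}$ minus the combined profit of its endpoints, so the total weight equals $2(C - C_0)\,p_{\max}$ minus the packing's profit, yielding the desired weight bound.

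For the reverse direction, given a perfect over-the-rainbow matching $(M, \xi)$, I would read off a packing by interpreting each colored edge as the content of a single container: an edge $\{v, v'\}$ means $v$ alone, an edge $\{v, w\}$ with $v, w \in \vectors_L$ means both together, and large vectors matched to blockers are left unpacked. The first occurrence of each color $c \in \colors \setminus \{\top\}$ is routed to the partially filled container it names (feasible by definition of $E_c$), and $\top$-colored edges go to empty containers. Surjectivity of $\xi$ guarantees no kept partially filled container is left empty. Inverting the weight computation above recovers the profit bound.

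The main obstacle I expect is the same subtlety that arose in \Cref{l:ReductionVC}: a color $c \in \colors \setminus \{\top\}$ may be used by $M$ more than once, in which case only the first occurrence may legitimately occupy the actual container for $c$. Extra occurrences must be rerouted to empty containers, which is sound because any edge in $E_c$ has endpoints fitting together with the small content of container $c$, hence a fortiori inside an emptier container. I would need to check that enough empty capacity is available, which follows from the fact that the matching uses exactly $C - C_0$ colored edges and that the guessing step commits to using precisely $C - C_0$ containers in total. Once this rerouting argument is handled carefully, the weight--profit identity derived from the edge weight $\gamma(\{v,w\}, c) = 2 p_{\max} - (p(v) + p(w))$ closes both directions of the equivalence.
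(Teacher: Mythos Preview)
Your proposal is correct and follows essentially the same route as the paper's proof: in both directions you translate container contents into colored edges (a pair $\{v,w\}$ or a single $\{v,v'\}$), absorb the leftovers into blockers via $\bot$-edges, and exploit the edge weight $2p_{\max}-(p(v)+p(w))$ to convert between packing profit and matching weight; the rerouting of repeated colors to empty containers is exactly the paper's device. Your upfront counting observation that any perfect matching has exactly $|\mathcal{B}|$ many $\bot$-edges and hence exactly $C-C_0$ colored edges is a slightly cleaner way to phrase what the paper does implicitly, and it also yields the bound $2(C-C_0)p_{\max}-p$ that the paper's own proof derives.
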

\iflongversion
\begin{proof}
  Assume, we are given a packing of large vectors inside the $C-C_0$ containers with profit at least $p$.
  For each packing of large vectors inside one container, we choose the edge between the corresponding pair of vectors (or between the vector and its copy in the case that the container has only one vector) for the matching and assign it the corresponding color. 
  Now there are exactly $2\cdot(C-C_0)$ nodes in $\vectors_L \cup \vectors_L'$ covered by the matching.
  The remaining $|\vectors_L \cup \vectors_L'| - 2\cdot(C-C_0)$ nodes in $\vectors_L \cup \vectors_L'$ are paired with one arbitrary node in $\mathcal{B}$. Since $\mathcal{B}$ contains exactly $2|\vectors_L| - 2\cdot(C-C_0)$ nodes, each node can be paired.
	
  The obtained matching $M$ is a perfect matching, since each node is covered. 
  Furthermore, each color in $\colors \setminus \{\top \}$ is used exactly once by definition of the colors.
  Let $\vectors_{L,S} \subseteq \vectors_L$ be the set of large vectors packed in the given solution.
  By definition of the solution it holds that $p(\vectors_{L,S}) \geq p$.
  Note that the weight of an edge between two nodes $v,w \in \vectors_L$ is given by $2p_{\max} - (p(v) +p(w))$, while edges between a node $v \in \vectors_L$ and its copy $v'$ have the weight $2p_{\max} - p(v)$. 
  All the edges to the blocker nodes $\mathcal{B}$ have weight $0$.
  Hence the weight of the matching is given by $2(C-C_0)p_{\max} - p(\vectors_{S}) \leq 2(C-C_0)p_{\max} - p$, which proves the first implication.
	
  To prove the other direction, assume that we are given a perfect over-the-rainbow matching with weight at most $2(C-C_0)p_{\max} - p$.
  Each of the $2|\vectors_L| - 2\cdot(C-C_0)$ blocker nodes in~$\mathcal{B}$ is matched to exactly one node in $|\vectors_L \cup \vectors_L'|$. 
  As a result there are exactly $2\cdot(C-C_0)$ nodes in $|\vectors_L \cup \vectors_L'|$ that are paired by the matching.
  We place the corresponding vectors inside the corresponding containers with regard to the color of the matching edge. 
  If a color $c \in \colors \setminus \{\top\}$ appears more than once, we use an empty container.
	
  This packing is valid, since each color appears at least once and hence we can fill each container.
  Let $\vectors_{L,S} \subseteq \vectors_L$ be the set of large vectors that are matched with a node from the set $\vectors_L \cup \vectors_L'$.
  Then, by definition of the weight function, the matching has a size of $2(C-C_0)p_{\max} - \sum_{v \in \vectors_{L,S}}p(v) \leq 2(C-C_0)p_{\max} - p$.
  As a consequence the profit of the packing is given by  $\sum_{v \in \vectors_{L,S}}p(v) \geq p$.
\end{proof}
\fi

We can summarize the steps of the algorithm as follows. 
For each choice of small items and each possibility to distribute these items, the algorithm considers each choice of partially filled containers that do not contain an additional large item.
For each of these choices the algorithm constructs the graph described above.
Then it performs a binary search for a perfect over-the-rainbow matching with the smallest possible weight in the bounds $[ 0,2(C-C_0)p_{\max}]$.
Finally, it returns the packing with the largest total profit found among all possibilities. 

By \Cref{thm:over-the-rainbow-algo}, we need at most $2^{|\colors|} \cdot n^{\cO(1)} \cdot (p_{\max})^{O(1)} = 2^{k+2} \cdot n^{\cO(1)}\cdot (p_{\max})^{O(1)}$ operations to solve the constructed \acl{POTRM} problem.
Finding the correct choice and partition of the $k$ small vectors can be done in $\cO(k \cdot k!)$ guesses. 
To find the containers without a large vector can be done in $\cO(2^k)$ guesses.
Finally the construction of the graph $G$ needs at most $\cO(n^2k)$ operations.
The binary search procedure over the profits can be done in at most $\cO(\log((C-C_0)p_{\max})) = \cO(\log(n)+\log(p_{\max}))$ operations, since the number of containers is bounded by $n^{\cO(1)}$.
Hence, the run time is $2^{2k+2}\cdot k! \cdot n^{\cO(1)}\cdot (p_{\max})^{O(1)} =  4^{k}\cdot k! \cdot n^{\cO(1)}\cdot (p_{\max})^{O(1)}$.
\fi

\section{Find Over-the-Rainbow Matchings with Conjoining Matchings}
\label{sec:solve_otrpm}
\iflongversion
In the previous section, we have reduced several packing and covering problems to \acl{pPOTRM} problems.
\else
In the previous section, we reduced \acl{pVP} to the \acl{pPOTRM} problem.
\fi
Of course, all this effort would be in vain without the means to find such matchings.
This section presents a reduction to the task of finding a conjoining matching, which results in a parameterized algorithm for finding perfect over-the-rainbow matchings by applying \Cref{t:GutinCGM}.
Overall, this proves \Cref{thm:over-the-rainbow-algo}, which is repeated below for convenience:

\par\medskip\noindent\textbf{\sffamily Claim of \Cref{thm:over-the-rainbow-algo}.}
\bgroup\itshape\ignorespaces
\claimedtheorem
\egroup\smallskip

We aim to construct graphs $H'$ and $G'$ such that $G$ has a perfect over-the-rainbow matching if, and only if, $G'$ has a perfect conjoining matching with respect to $H'$ of the same weight.
Recall that in an over-the-rainbow matching, we request an edge of every color to be part of the matching; while in a conjoining matching, we request edges between certain sets of nodes to be part of the matching.
For the reduction, we transform $G$ into $G_1,\dots,G_{\left|\mathcal{C}\right|}$ where each~$G_{c}$ is a copy of $G$ containing only edges of color $c$. 
Hence, $V(G_{c})=\{v_c\mid v \in V(G)\}$, i.e., $v_c$ is the copy of $v \in V(G)$ in $V(G_c)$, and $G_{c}$ contains only edges $e\in E(G)$ with $c\in\lambda(e)$. 
We set $G'$ to be the disjoint union of the~$G_c$ while setting $V(H')=\{\, V(G_c) \mid c\in\mathcal{C}\,\}$ and $E(H')=\{\,\{h,h\} \mid h\in V(H')\,\}$. 
Now a conjoined matching contains an edge of every color---however, the same edge of~$G$ could be used in multiple ways in the different copies $G_c$.

To address this issue, we introduce a gadget that will enforce any perfect matching in~$G'$ to use at most one copy of every edge
of~$G$.
In detail, for every node $v\in V(G)$ we will add an independent set $J(v)$ of size $\left|\mathcal{C}\right|-1$ to~$G'$.
Furthermore, we will fully connect $J(v)$ to all copies of $v$ in $G'$, that is, we add the edges $\{v_c,x\}$ for all $c\in\mathcal{C}$ and $x\in J(v)$ to $G'$.
This construction is illustrated in \Cref{fig:edge-blocker-gadget}.
Observe that in any perfect matching of $G'$ all elements of $J(v)$ must be matched and, thus, we ``knock-out'' $\left|J(v)\right|=\left|\mathcal{C}\right|-1$ copies of~$v$ in~$G'$---leaving exactly one copy to be matched in one $G_c$.
We add one more node to $H'$ that represents the union of all the sets $J(v)$ and has no connecting edge.
To complete the description of the reduction, let us describe the weight function of $G'$: For each $e\in E(G')$, we define
\iflongversion
\begin{equation*}
	\gamma'(e) : =\begin{cases}
		\gamma(e,c) & \text{if $e \in E(G_c)$ for some $c\in\mathcal{C}$} \\
		0 & \text{otherwise.} 
	\end{cases}
\end{equation*}
\else
$\gamma'(e)=\gamma(e,c)$ if there exists a $c\in\mathcal{C}$ such that $e \in E(G_c)$, and $\gamma'(e)=0$ otherwise.
\fi
Note that this definition implies that $\gamma'(e)=0$ for each $e$ with $e\cap J(v)\neq\emptyset$ for some $v \in V(G)$.
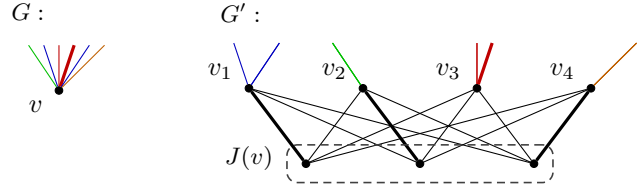
\begin{figure}
  \begin{center}
    \begin{tikzpicture}[dot/.style={circle,draw=black,fill=black,inner sep=1pt},
      node distance = 0.10cm]
      \begin{scope}
        \node[anchor=north west] (Glabel) at (-0.75,1.25) {\small$G:$};
        \node[dot, label={[label distance=0.33mm]190:$v$}] (v) at (0,-0.03) {};
        \draw[gray,green!75!black]  (v) -- ++(-0.4,.6);
        \draw[gray,blue!75!black]   (v) -- ++(-0.2,.6);
        \draw[gray,red!75!black]    (v) -- ++(-0.0,.6);
        \draw[very thick,red!75!black]   (v) -- ++(+0.2,.6);
        \draw[gray,blue!75!black]   (v) -- ++(+0.4,.6);
        \draw[gray,orange!75!black] (v) -- ++(+0.6,.6);
      \end{scope}	
      \begin{scope}[xshift=1cm,yshift=-.03]
        \node[anchor=north west] (Gdlabel) at (1.0,1.25) {\small$G':$};
        \foreach \i/\ax/\bx/\cx/\col in {1/-0.2/0.4/0.4/blue,2/-0.4/-0.4/-0.4/green,3/0/0/+0.2/red,4/0.6/0.6/+0.6/orange}{
          \node[dot, label={[label distance=0.33mm]177.5:$v_{\i}$}] (x\i) at (\i*1.5,0){};    
          \draw[gray,\col !75!black] (x\i) -- ++(\ax,.6);
          \draw[gray,\col !75!black] (x\i) -- ++(\bx,.6);
          \draw[gray,\col !75!black] (x\i) -- ++(\cx,.6);
        }
        \foreach \i in {1,2,3}{
          \node[dot] (y\i) at (\i*1.5+0.75,-1){};
        }
        \foreach \i in {1,2,3,4}{
          \foreach \j in {1,2,3}{
            \draw[] (x\i) -- (y\j);
          }
        }
        \draw[very thick] (x1) -- (y1);
        \draw[very thick] (x2) -- (y2);
        \draw[very thick] (x4) -- (y3);
        \draw[very thick,red!75!black] (x3) -- ++(+0.2,.6);
        \node[draw, semithick,
              rounded corners, densely dashed,
              inner sep=0pt, minimum width=3.5cm,
              minimum height=0.5 cm, color=darkgray] at (y2.center) {};
         \node[baseline] at ($(y1)+(-0.75,.5ex)$) {\small$J(v)$};
      \end{scope}
    \end{tikzpicture}
  \end{center}
  \caption{
    Reducing the problem of finding a perfect over-the-rainbow matching to the problem of finding a perfect conjoining matching.
    Left: single node of the colored input graph with a thick edge from a perfect matching.
    Right: $|\mathcal{C}|=4$ copies of $v$ in $G'$; the corresponding subgraphs $G_c$ only contain edges of a single color.
    At the bottom, the added set $J(v)$ which is fully connected to all copies of $v$.
    The thick edges indicate how these nodes are paired in a perfect matching.
  }
  \label{fig:edge-blocker-gadget}
\end{figure}
\vspace{-1em}
\iflongversion
\begin{lemma}
\else
\begin{lemma}[$\star$]
\fi
\label{lem:rainbow-conjoining-matching}  
  Let $\big(G,\lambda,\gamma\big)$ be a colored and edge-weighted graph, and let $G'$ and $H'$ be defined as above.
  There is a perfect over-the-rainbow-matching $M$ of weight $\ell$ in $G$ if, and only if, there is a perfect conjoining matching $M'$ of weight $\ell$ in $G'$.
\end{lemma}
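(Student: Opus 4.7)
The plan is to show both directions by giving explicit mappings between the two objects, while carefully accounting for the role of the blocker sets $J(v)$.

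\textbf{Forward direction.} Given a perfect over-the-rainbow matching $M$ with color assignment $\xi\colon M\to\mathcal{C}$ of weight $\ell$ in $G$, I would construct $M'$ as follows. For each edge $e=\{u,v\}\in M$ with $\xi(e)=c$, put the corresponding copy $\{u_c,v_c\}$ into $M'$; this uses exactly one copy of each original vertex $v$, namely the copy $v_{\xi(e)}$ where $e$ is the unique edge of $M$ containing $v$. The remaining $|\mathcal{C}|-1$ copies of $v$ are still unmatched, and they are all connected to the $|\mathcal{C}|-1$ nodes of $J(v)$; pick any perfect bipartite matching between them and add it to $M'$. Surjectivity of $\xi$ guarantees that for every color $c$ there is an edge in $E(G_c)\cap M'$, so the self-loop $\{h,h\}$ at $h=V(G_c)$ in $H'$ is satisfied, making $M'$ conjoining. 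The $J(v)$-edges contribute zero weight by definition of $\gamma'$, so the total weight of $M'$ equals $\sum_{e\in M}\gamma(e,\xi(e))=\ell$.

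\textbf{Backward direction.} Given a perfect conjoining matching $M'$ of weight $\ell$ in $G'$, I would first observe the key structural fact: each node in $J(v)$ is adjacent only to the $|\mathcal{C}|$ copies of $v$. Because $J(v)$ is an independent set of size $|\mathcal{C}|-1$ and every one of its nodes must be matched, exactly $|\mathcal{C}|-1$ copies of $v$ are paired with $J(v)$, leaving exactly one copy $v_{c(v)}$ to be matched by an edge inside some subgraph $G_c$. For this surviving copy, its matching partner inside $G_c$ must be some $w_c$ where $w\in V(G)$ is also the unique surviving copy for $w$, so $c(v)=c(w)=c$. I would then define $M:=\{\,\{v,w\}\mid \{v_c,w_c\}\in M'\cap E(G_c),\, c\in\mathcal{C}\,\}$ and $\xi(\{v,w\}):=c$. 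Each $v\in V(G)$ appears in exactly one such edge (the one corresponding to $v_{c(v)}$), so $M$ is a perfect matching of $G$, and by construction $\xi(e)\in\lambda(e)$ for each $e\in M$.

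\textbf{Surjectivity and weight.} The conjoining condition applied to the self-loop $\{h,h\}$ at each color node $h=V(G_c)$ of $H'$ forces at least one matching edge of $M'$ to lie inside $G_c$, which translates to a preimage of $c$ under $\xi$; hence $\xi$ is surjective and $(M,\xi)$ is a perfect over-the-rainbow matching. Finally, the weight accounting is immediate: edges of $M'$ incident to any $J(v)$ contribute $0$, and each surviving edge $\{v_c,w_c\}\in M'$ contributes $\gamma'(\{v_c,w_c\})=\gamma(\{v,w\},c)$, which is precisely the contribution of $\{v,w\}$ to the weight of $M$. Hence both weights equal $\ell$.

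\textbf{Anticipated obstacle.} The only delicate step is verifying that the surviving copies ``line up'' — i.e., that if $v_{c(v)}$ is matched in $G'$ to another copy (not to $J(\cdot)$), its partner must automatically be the surviving copy $w_{c(w)}$ of some $w$ with $c(w)=c(v)$. This follows because the edges inside $G_c$ only connect copies with the same subscript $c$, and because each vertex contributes exactly one surviving copy by the $J(v)$-counting argument; I will make this explicit in the write-up.
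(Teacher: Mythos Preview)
Your proposal is correct and follows essentially the same approach as the paper: both directions hinge on the $J(v)$-counting argument that forces exactly one copy of each vertex to survive, and the weight and surjectivity bookkeeping is identical. The only cosmetic difference is notation (the paper names the surviving copy $\alpha(v)$ where you write $v_{c(v)}$), and your ``anticipated obstacle'' is exactly the point the paper spells out when arguing $\alpha(w)=w_i$.
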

\iflongversion
\begin{proof}
   First, let us consider a perfect over-the-rainbow matching $M$ in $G$.
   Let $\xi\colon M\rightarrow\mathcal{C}$ be a surjective function with $\xi(e)\in\lambda(e)$ for all $e\in E(G)$.
   We have to show that there is a perfect conjoining matching in $G'$.
   Let $M'=\{\{v_{\xi(e)},w_{\xi(e)}\}\mid e=\{v,w\}\in M\}$.
   Since~$M$ is a matching in $G$, $M'$ is a matching in $G'$; and since $\xi$ is surjective we have that $M'$ contains at least one edge in every copy $G_c$ of $G$ in $G'$ and, thus, $M'$ is actually a conjoining matching.
   Furthermore, there is a bijection~$f(e)=e_{\xi(e)}$ between $M$ and $M'$ such that $\gamma(e,\xi(e))=\gamma'(f(e))$, which implies that the total weight of both matchings is the same. 

   By the definition of $M'$, for every node~$v\in V(G)$, there is exactly one color~$c\in\mathcal{C}$ for that there is an edge
   in~$M'$ containing $v_c$.
   Therefore, the set $\{v_1,\dots,v_{|\mathcal{C}|}\}$ contains exactly $|\mathcal{C}|-1$ unmatched nodes for all $v\in V(G)$.
   We conclude that $M'$ can be extended to a perfect conjoining matching $M''$ by paring these nodes with $J(v)$.
   Observe that $M''$ has the same weight as $M'$ as the added edges have weight zero.
   Therefore, $M''$ has the same weight as $M$.

   For the other direction, let us consider a perfect conjoining matching $M'$ in~$G'$.
   Observe that for all nodes $v\in V(G)$ the nodes in $J(v)$ have to be matched by $M'$ and, thus, for all nodes $v\in V(G)$ there is exactly one node $\alpha(v)\in\{v_1,\dots,v_{\left|\mathcal{C}\right|}\}$ that is \emph{not} matched with an element of $J(v)$.
   We define the set $M=\{\,\{v,w\}\mid\text{$v,w\in V(G)$ and $\{\alpha(v),\alpha(w)\}\in M'$}\,\}$ and claim that $M$ is a perfect over-the-rainbow matching of the same weight as $M'$.
   First observe that all $v\in V(G)$ are matched by $M$, since $M'$ is a perfect matching and, thus, matches $\alpha(v)$ with, say, $w_i$.
   Observe that by the definition of $\alpha$ we have $w_i\not\in J(v)$ and by the construction of $G'$ we have that $w_i$ is a copy of some $w\in V(G)$ (it can, in particular, not be part of any other $J(u)$).
   Since $w_i$ is paired with $\alpha(v)$, we conclude $\alpha(w)=w_i$ and, thus, $\{\alpha(v),\alpha(w)\}\in M'$.
   Further, notice that every $v\in V(G)$ can be matched by at most one element of $M$, as $M'$ is a perfect matching and, thus, matches $\alpha(v)$ with exactly one other node.
   We conclude that $M$ is a perfect matching of $G$.
   Finally, for all $\{v,w\}\in M$ observe that $\{\alpha(v),\alpha(w)\}$ must lie in some copy $G_c$ of $G$ in $G'$.
   We define $\xi(\{v,w\})=c$ and $\gamma(\{v,w\},c)=\gamma'(\{\alpha(v),\alpha(w)\})$.
   Observe that $\xi$ is surjective since $M'$ is conjoining and, thus, witnesses that $M$ is a perfect over-the-rainbow matching of $G$.

   To conclude the proof, notice that $M$ has the same weight as $M'$ as for any edge of $M'$ that has non-zero weight (that is, any edge that is not connected to some $J(v)$), we have added exactly one edge of the same weight to $M$.   
 \end{proof}
\fi
\begin{proof}[Proof of \Cref{thm:over-the-rainbow-algo}]
  Let $\big(G, \lambda, \gamma, \ell\big )$ be an instance of \acl{pPOTRM}.
  We construct in polynomial time an instance $(G',H',\gamma',\ell)$ of \acl{pCM}, where the partition of $V(G')$ is defined as $V(G_1)\dotcup V(G_2)\dotcup\dots\dotcup V(G_{\left|\mathcal{C}\right|})\dotcup J$ with $J=\bigcup_{v\in V(G)}J(v)$ and $E(H')$ contains one self loop for each $V(G_c)$, $c \in \colors$.  
  By \Cref{lem:rainbow-conjoining-matching}, $(G',H',\gamma',\ell)$ has a perfect conjoining matching of weight $\ell$ if, and
  only if $\big(G, \lambda, \gamma, \ell\big )$ has a perfect over-the-rainbow matching of weight $\ell$.
  We apply \Cref{t:GutinCGM} to find such a conjoining matching in time \smash{$2^{|E(H')|}\cdot n^{O(1)}\cdot \ell^{O(1)}$}.
  Observe that $|E(H')|=|\colors|$ and, thus, we can find the sought perfect over-the-rainbow matching in time
  \smash{$2^{\left|\colors\right|}\cdot n^{O(1)}\cdot \ell^{O(1)}$}.
\end{proof}

\section{A Deterministic Algorithm for Bin Packing with Few Small Items}
\label{sec:deterministic}
We now present a \emph{fully-deterministic algorithm} for \acl{pBP}.
The price we have to pay for circumventing the randomness is an increased run time as we avoid the polynomial identity testing subroutine.
On the bright side, this makes the algorithm straightforward and a lot simpler.
We anticipate that extending this algorithm for \acl{pVP} seems quite challenging.
The main obstacle here is to identify the maximum item size in some sets, a task for which there does not seem to be a  sensible equivalent notion for vectors.

\vspace{-1em}
\subparagraph*{About the Structure of Optimal Solutions.}
In the following, we prove the existence of an optimal solution that admits some useful properties regarding the placement of large items relating to small ones. These properties are utilized in the algorithm later on.
\begin{claim}
\label{clm:SmallItemProperty1}
  There exists an optimal solution where the total size of small items on each bin containing only small items is larger than the total size of small items on each bin containing additionally large items.
\end{claim}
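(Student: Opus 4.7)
The plan is a one-shot exchange argument driven by a potential function. Call a bin \emph{light} if it contains only small items and \emph{mixed} if it contains at least one large item together with at least one small item, and let $s(B)$ denote the total size of items in bin $B$. Define
\[
\Phi(\sigma) \;=\; \sum_{B \text{ light in } \sigma} s(B),
\]
and, among all optimal packings of the instance, fix one OPT that maximizes $\Phi$. Since $\Phi$ is nonnegative and bounded by the total number of bins, such an OPT exists.

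Suppose for contradiction that the claim fails in OPT: there exist a light bin $B_s$ of total size $s_s$ and a mixed bin $B_m$ whose small items have total size $s_m$ and whose large items have total size $L_m$, with $s_s < s_m$. I would now swap the entire set of small items of $B_s$ with the entire set of small items of $B_m$. Feasibility is immediate: the modified $B_s$ holds only small items summing to $s_m \le s_m + L_m \le 1$, and the modified $B_m$ holds items summing to $L_m + s_s \le L_m + s_m \le 1$. The bin count is unchanged, so the new packing is again optimal, but its potential is $\Phi(\text{OPT}) + (s_m - s_s) > \Phi(\text{OPT})$, contradicting the maximality of $\Phi(\text{OPT})$. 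Hence no such pair $(B_s,B_m)$ exists, which is the claim.

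The main subtlety I foresee is the borderline case $s_s = s_m$: a tiny instance with items $0.3, 0.3, 0.7$ (under the $1/3$-smallness threshold) shows that equality between a light-bin size and the small-portion of a mixed bin can be forced in every optimum, so the natural reading of the claim is \emph{at least as large as}. If the strict version is actually intended, the tiebreaker can be refined — for instance, among all optima of maximal $\Phi$, pick one lexicographically maximizing the sorted sequence of light-bin sizes — and the same exchange either increases $\Phi$ or improves the secondary order. Because the swap never touches the large items and never changes the bin count, optimality is preserved throughout, making the argument essentially mechanical once the right potential is fixed.
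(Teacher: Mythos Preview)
Your argument is correct and is essentially the same exchange argument as the paper's: swap the small-item sets of a light bin and a mixed bin whenever the mixed bin carries more small-item mass. The paper phrases this as ``iteratively repeat until the property holds,'' whereas your potential $\Phi$ packages the same swap into a one-shot extremal choice and sidesteps any termination discussion; this is a cleaner presentation of the identical idea. Your observation about the borderline $s_s = s_m$ case is also on point: the paper writes the violated condition as $s_1 \leq s_2$ but the swap makes progress only when $s_1 < s_2$, so the statement should really be read as ``at least as large as,'' exactly as your example with items $0.3, 0.3, 0.7$ demonstrates.
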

\iflongversion    
\begin{figure}
  \centering
  \begin{tikzpicture}
    \pgfmathsetmacro{\h}{2}
    \pgfmathsetmacro{\w}{1}

    \draw (0,0) rectangle (\w,\h);
    \node[below] at (0.5*\w,0) {$b_1$};
    \draw[fill = lightgray] (0,0) rectangle node {$s_1$} (\w,0.2*\h);

    \begin{scope}[xshift = 2*\w cm]
      \draw (0,0) rectangle (\w,\h);
      \node[below] at (0.5*\w,0) {$b_2$};
      \draw[fill = lightgray] (0,0) rectangle node {$s_2$} (\w,0.3*\h);
      \draw[fill = gray] (0,0.3*\h) rectangle node {$l_1$} (\w,0.65*\h);
      \draw[fill = gray] (0,0.65*\h) rectangle node {$l_2$} (\w,\h);
    \end{scope}

    \begin{scope}[xshift = 4*\w cm]
      \draw [->,decorate, decoration={snake,pre length=0pt,post length=1pt}, thick] (0,0.5 *\h) -- (2*\w,0.5*\h);
    \end{scope}

    \begin{scope}[xshift = 7*\w cm]
      \draw (0,0) rectangle (\w,\h);
      \node[below] at (0.5*\w,0) {$b_1$};
      \draw[fill = lightgray] (0,0) rectangle node {$s_2$} (\w,0.3*\h);
    \end{scope}

    \begin{scope}[xshift = 9*\w cm]
      \draw (0,0) rectangle (\w,\h);
      \node[below] at (0.5*\w,0) {$b_2$};
      \draw[fill = lightgray] (0,0) rectangle node {$s_1$} (\w,0.2*\h);
      \draw[fill = gray] (0,0.3*\h) rectangle node {$l_1$} (\w,0.65*\h);
      \draw[fill = gray] (0,0.65*\h) rectangle node {$l_2$} (\w,\h);
    \end{scope}
  \end{tikzpicture} 
  \caption{\label{figure:claim10}Proof of \Cref{clm:SmallItemProperty1}. The light gray rectangles denoted by $s_1$ and $s_2$ represent the load of small items on each of the two bins $b_1$ and $b_2$. The dark gray areas denoted with $L_1$ and $l_2$ represent large items in the bin $b_2$} 
\end{figure}
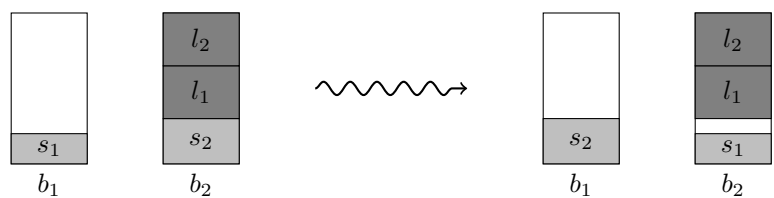
\begin{claimproof}
  Suppose an optimal solution, where the stated property is violated.
  Thus, there exists two bins $b_1$ and $b_2$, where the total size $s_1$ of small items on $b_1$ only admitting small items is smaller than the total size $s_2$ of small items on $b_2$ where also large items are placed ($s_1 \leq s_2$).
  We can now swap the sets of small items in $b_1$ and $b_2$.
  Since $s_1 \leq s_2$, the load of $b_2$ becomes smaller when now containing small items with load $s_1$. 
  On the other hand, the total load on $b_1$ is now $s_2$.
  Since this entire set was placed on one bin before, $b_1$ is not over packed.
  We can iterative repeat this step until the property is satisfied for all bins.
  The proof is illustrated in \Cref{figure:claim10}.
\end{claimproof}
\fi

\begin{claim}
\label{clm:SmallItemProperty2}
  Given an optimal solution and an arbitrary order of the bins containing small items and exactly one large item. We can repack these large items correctly using a largest fitting approach with respect to the order of the bins.
  In detail, we place greedily the largest fitting item into the current bin.
\end{claim}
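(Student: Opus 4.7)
The plan is to prove this by an exchange argument, carried out as an induction on the position $i$ of the bin in the given order. Let $b_1,\dots,b_m$ denote the bins under consideration and let $s_i$ denote the total size of the small items already placed in $b_i$; each $b_i$ thus has residual capacity $T-s_i$ for a single large item. The greedy rule, when handling $b_i$, selects among the still unassigned large items the one of maximum size whose size is at most $T-s_i$.

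First I would state the inductive invariant: after greedy has filled $b_1,\dots,b_i$, there exists a feasible assignment $\tau$ of the remaining large items to $b_{i+1},\dots,b_m$ that, together with the greedy choices on $b_1,\dots,b_i$, forms a valid packing of all these bins. The base case $i=0$ is witnessed by the given optimal solution restricted to $b_1,\dots,b_m$.

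For the inductive step, assume the invariant holds after $i-1$ bins, with witness $\tau$. Let $\ell^\star$ be the large item chosen by the greedy for $b_i$ and let $\ell=\tau(b_i)$. If $\ell=\ell^\star$, we are done; otherwise $\ell^\star$ is placed by $\tau$ in some bin $b_j$ with $j>i$. I would define $\tau'$ by exchanging these two items, that is, $\tau'(b_i)=\ell^\star$ and $\tau'(b_j)=\ell$, and show that $\tau'$ is still feasible. Feasibility at $b_i$ is immediate from the greedy choice. For $b_j$, note that $\ell$ fitted into $b_i$ under $\tau$, so $\ell$ belonged to the set of candidates considered by the greedy for $b_i$; hence by the maximality of $\ell^\star$ we have $\operatorname{size}(\ell)\le\operatorname{size}(\ell^\star)$, and since $\ell^\star$ fitted into $b_j$ under $\tau$, so does $\ell$. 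Thus $\tau'$ witnesses the invariant after step $i$.

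The main delicate point, and the only one that is not pure bookkeeping, is the two-sided feasibility check for the swap: one direction is handed to us by the greedy's selection rule, while the other requires the key observation that the previously packed item $\ell$ was itself a legitimate candidate at $b_i$, so its size is dominated by $\ell^\star$'s and it harmlessly slides into $b_j$. Iterating the argument through $i=1,\dots,m$ shows that the greedy never gets stuck (the current $\tau$ always provides a feasible candidate) and produces a valid packing equal to the greedy output, proving the claim.
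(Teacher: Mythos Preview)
Your proof is correct and follows essentially the same approach as the paper: a bin-by-bin exchange argument showing that whenever the greedy choice $\ell^\star$ differs from the item $\ell$ currently assigned to the bin, the two can be swapped because $\operatorname{size}(\ell)\le\operatorname{size}(\ell^\star)$. Your version is somewhat more carefully phrased (explicit inductive invariant, explicit observation that the greedy never gets stuck), but the underlying idea matches the paper's proof.
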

\iflongversion 
\begin{claimproof}
  Consider the bins containing small items and exactly one large item in the given order. 
  If the current bin $b$ contains the largest fitting item regarding all items being packed on the later bins regarding the order, we consider the next bin.
  Otherwise, we swap the item $i_b$ inside this bin with the largest item $i_{\max}$ that fits inside this bin and was not placed inside a bin which was considered before. 
  Note that the size of $i_b$ has to be at most the size of $i_{\max}$, since $i_{\max}$ is the largest item, that fits inside $b$.
  As a consequence, no bin is over-packed after this swap since the total size of the items inside the other bin decreases or stays the same.
\end{claimproof}
\fi

\begin{claim}
\label{clm:SmallItemProperty3}
  Consider an optimal solution where each partially filled bin contains exactly two items.
  Let $i_s$ be the smallest large item and $i_\ell$ be the largest one and let them fit together inside a partially filled bin.
  Then there exists an optimal solution, where $i_s$ is positioned inside a partially filled bin, together with the largest large item, that does fit additionally.
\end{claim}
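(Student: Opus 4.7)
The plan is to prove the claim by an exchange argument on an optimal solution $\sigma$ satisfying the hypothesis, so that there is a partially filled bin $b^\ast$ into which $i_s$ and $i_\ell$ fit together. Let $i_m$ denote the largest large item that can be placed alongside $i_s$ into some partially filled bin; such an $i_m$ exists because $i_\ell$ is itself a candidate. The goal is to transform $\sigma$ into another optimal solution in which $i_s$ and this largest fitting item share a partially filled bin.

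The first step is to ensure that $i_s$ lies in a partially filled bin. If it already does, say in a bin $b$, proceed to the next step. Otherwise $i_s$ lies in a bin $b_0$ containing only large items. By optimality of $\sigma$, $b_0$ cannot contain $i_s$ alone, since $i_s$ fits in $b^\ast$ and we could then drop $b_0$ and strictly improve. Hence $b_0$ contains $i_s$ and exactly one further large item $i_k$. We then swap $i_s$ with a carefully chosen item of $b^\ast$: because $b^\ast$ has exactly two items whose total size together with $i_s+i_\ell$ is at most $1$, one checks that after the swap both bins remain feasible.

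The second step is the actual pairing. Now $i_s$ lies in a partially filled bin $b$ with some large partner $i_j$. Let $i_m^{(b)}$ be the largest large item that fits alongside $i_s$ in $b$; by definition $i_m^{(b)} \ge i_j$, as $i_j$ is itself a candidate. If $i_j = i_m^{(b)}$ we are done; otherwise swap $i_j$ and $i_m^{(b)}$, placing $i_m^{(b)}$ into $b$ and $i_j$ into the bin previously containing $i_m^{(b)}$. The bin $b$ remains feasible by the choice of $i_m^{(b)}$, and the other bin remains feasible because $i_j \le i_m^{(b)}$, so the swap cannot increase its total load. Applying this at the global level (ranging over all partially filled bins where $i_s$ could be placed) then yields the claim.

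The main obstacle is the feasibility verification in the first step: when swapping $i_s$ out of the purely-large bin $b_0$ with an item coming from $b^\ast$, the incoming item together with $i_k$ might exceed the bin capacity. Resolving this requires a case distinction on the composition of $b^\ast$ (two small items, or one small and one large item) and uses that $i_s$ is the smallest large item, so comparable size bounds are available; in borderline situations one may instead first relocate $i_k$ to a third bin or selectively displace the smaller of $b^\ast$'s two items. Once $i_s$ has been successfully moved into some partially filled bin, the second step is a clean constant-size exchange.
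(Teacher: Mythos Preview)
Your second step—once $i_s$ sits in a partially filled bin, swap its current partner for the largest large item that fits there—is exactly what the paper does and is correct.

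The gap is in your first step. First, a misreading: in this section a \emph{partially filled bin} is one that already holds small items, and the hypothesis ``each partially filled bin contains exactly two items'' means exactly two \emph{large} items on top of that small load (compare Claim~\ref{clm:SmallItemProperty4} and the surrounding algorithm). So $b^\ast$ carries its small items plus two large items, not two items in total, and your case split ``two small items, or one small and one large'' is off target. This also invalidates your optimality argument that $b_0$ cannot contain $i_s$ alone: you cannot simply add $i_s$ on top of $b^\ast$'s current contents, since $b^\ast$ already holds two large items.

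More importantly, the single-item swap you attempt is the wrong granularity, which is precisely why you hit the ``main obstacle'' and do not close it. The clean fix is to swap a \emph{pair}. If $i_s$ lies in a bin $b_0$ with no small items, then $b_0$ contains $i_s$ together with at most one further large item $i_k$. Since $i_\ell$ is the largest large item we have $i_k\le i_\ell$, so $i_s+i_k$ fits into $b^\ast$ (together with its small load) whenever $i_s+i_\ell$ does. Now exchange the set $\{i_s,i_k\}$ with the two large items currently in $b^\ast$: the partially filled bin receives $\{i_s,i_k\}$ and stays feasible by the previous sentence, while the displaced pair moves to $b_0$, which has no small items and therefore trivially accommodates two large items that already coexisted in a more constrained bin. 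After this pair-swap $i_s$ lies in a partially filled bin, and your second step finishes the proof.
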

\iflongversion 
\begin{claimproof}
  Consider the optimal solution and the position of the smallest large item $i_s$. 
  If $i_s$ is positioned inside a partially filled bin, we can swap the additional large item with the largest item that fits together with $i_s$ into this bin.
  Since this swap replaces an item inside one other bin with a smaller item, the total size of the items inside this bin decreases and hence no bin is be over-packed.
	
  If $i_s$ is not positioned inside a partially filled bin, then it fits together with the other large item it is currently paired with into a partially filled bin since we assumed that $i_s$ even fits together with the largest item $i_l$ into a partially filled bin. 
  We swap this pair with the two large items of one (arbitrary) fitting partially filled bin. 
  After this swap no bin is over-packed since the other two large items fit in a partially filled bin and hence they fit inside an empty bin as well. 
  Finally, we swap the item that is currently paired with the small item with the largest item that fits inside this bin together with $i_s$.
  As seen above, after this swap there is no bin that is over-packed.
\end{claimproof}
\fi

\begin{claim}
\label{clm:SmallItemProperty4}
  Consider an instance $I$, where the largest large item $i_\ell$ does not fit together with the smallest large item $i_s$ inside any partially filled bin and there is an optimal solution, where all partially filled bins contain exactly two large items.
  Then there is an optimal solution which places $i_\ell$ together with the largest fitting large item inside one bin, or $i_\ell$ is placed alone inside a bin, if there no large item fits together with $i_\ell$ inside one bin.
\end{claim}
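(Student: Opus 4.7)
The plan is to take the hypothesized optimal solution and perform at most one swap to yield another optimal solution of the required shape. The argument splits into three natural steps: locating $i_\ell$, handling the trivial case, and designing the swap in the main case.

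First, I would show that in the given optimal solution $i_\ell$ must sit in a bin $b_0$ containing no small items. Suppose otherwise, so $b_0$ is partially filled. By hypothesis, $b_0$ contains exactly two large items, so besides $i_\ell$ there is some $i' \ge i_s$ together with small items of total size $s$ satisfying $i_\ell + i' + s \le 1$. But then $i_\ell + i_s + s \le i_\ell + i' + s \le 1$, contradicting the assumption that $i_\ell$ and $i_s$ do not fit together in any partially filled bin.

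Next, let $i^{*}$ denote the largest large item with $i_\ell + i^{*} \le 1$, if any. If no such $i^{*}$ exists, then $b_0$ cannot accommodate a second large item either, so $i_\ell$ must be alone in $b_0$, establishing the second alternative of the claim. Otherwise, assume $i^{*}$ exists and is not already paired with $i_\ell$. Let $b^{*}$ denote the bin currently holding $i^{*}$, and let $i^{\dagger}$ denote $i_\ell$'s current partner in $b_0$, if any. I would perform the swap that places $i_\ell$ and $i^{*}$ together in $b_0$ and moves $i^{\dagger}$ (or nothing) into $b^{*}$ alongside the remaining original occupants of $b^{*}$. Feasibility of $b_0$ after the swap is immediate from the choice of $i^{*}$. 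For $b^{*}$ I would invoke the monotonicity $i^{\dagger} \le i^{*}$, which holds because $i^{\dagger}$ also fits together with $i_\ell$ and $i^{*}$ was chosen maximal among such items; hence if $b^{*}$ originally contained $i^{*}$, a further large item $i^{**}$, and small items of load $s$, the new load is $i^{\dagger} + i^{**} + s \le i^{*} + i^{**} + s \le 1$. The only potentially problematic sub-case, in which both $i_\ell$ and $i^{*}$ were initially alone in their bins, is ruled out by the optimality of the starting solution, since merging them into a single bin would save a bin. Because the swap never increases the total number of bins, the resulting packing remains optimal and realizes the first alternative of the claim.

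The main obstacle I anticipate is the clean enumeration of the sub-cases arising from whether $i_\ell$ or $i^{*}$ is alone or paired and whether the bin containing $i^{*}$ is partially filled. All these sub-cases are handled uniformly by the single inequality $i^{\dagger} \le i^{*}$, which bounds the load of $b^{*}$ after the swap, together with the short optimality argument that excludes the degenerate ``both alone'' configuration.
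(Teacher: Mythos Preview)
Your proposal is correct and follows essentially the same approach as the paper: first argue that $i_\ell$ cannot sit in a partially filled bin (since together with its mandatory second large item it would also accommodate $i_s$), then swap $i_\ell$'s current partner with the largest fitting large item, using the maximality of that item to bound the load of the other bin. Your treatment is in fact a bit more careful than the paper's, as you explicitly exclude the degenerate ``both alone'' configuration via optimality, whereas the paper leaves this implicit.
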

\iflongversion 
\begin{claimproof}
  Consider an optimal solution for the given instance $I$, where each partially filled bin contains exactly two large items and $i_\ell$ and $i_s$ do not fit together inside a partially filled bin.
  Consider the bin $b_1$ containing the item $i_\ell$. 
  Obviously $i_\ell$ is not contained inside a partially filled bin, since it does not fit together with the smallest large item inside a bin and hence it cannot fit together with an other large item inside a partially filled bin.
  If there consider the largest item $i$ that does fit together with $i_\ell$ inside one bin and let $b_2$ be the bin containing this item.
  We can swap the item $i_+$ (if existent) that is currently placed together with $i_\ell$ with the item $i$. 
  Since the item $i_+$ has at most the size of the item $i$ the bin $b_2$ is not over-packed by this step.
  On the other hand, since $i_\ell$ and $i$ fit together inside a bin, and there is no small item inside $b_1$ this bin is not over-packed as well.

  If there is no large item that fits together with $b_1$ inside one bin, and $b_1$ does not contain any small items, $b_1$ is contained alone inside its bin in this case.
\end{claimproof}
\fi

\vspace{-2em}
\subparagraph*{The Complete Algorithm.}
In the first step of the algorithm, we sort the items regarding their sizes in $\cO(n \log(n))$.
Next, we guess the distribution of the small items. 
Since there are at most $k$ small items, there are at most $\cO(k!)$ possible guesses.
We call the bins containing small items partially filled bins.
There are at most $k$ of these bins.

Then, we guess a bin $b_1$ that does not contain any additional large item.  
All the partially filled bins, containing small items with a larger total size than $b_1$ do not contain any large item as well, see  \Cref{clm:SmallItemProperty1}.
Thus we can discard them from the following considerations.
There are at most $k$ possibilities for the guess of $b_1$.

Now, we guess which of remaining partially filled bins only contain one large item. 
There are at most $\cO(2^k)$ possibilities.
We consider all partially filled bins for which we guessed that they only contain one large item in any order and pair them with the largest fitting item.
By \Cref{clm:SmallItemProperty1}, we know that an optimal packing with this structure exists.
Afterwards, we discard these bins from the following considerations.

It remains to pack the residual large items.
Each residual, partially filled bin contains exactly two large items in the optimal solution, otherwise the guess was wrong. 
To place the correct large item, we proceed as follows:
Iterate through the large items in non-ascending order regarding their sizes.
Let $i_\ell$ be the currently considered item.
Further, let $i_s$ be the smallest large item from the set of large items that still need to be placed. 
Depending on the relation between $i_\ell$ and $i_s$, we place at least one of these two items inside a bin.
For the first case, it holds that $i_\ell$ does not fit together with $i_s$ inside a partially filled bin.
Then, we place $i_\ell$ together with the largest fitting item $i$ from the set of large items that are not already placed inside one empty bin or place it alone inside an empty bin if such an item does not exist. 
The item $i$ can be found, or its non-existence be proved, in time $\cO(\log(n))$.
For the second case, it holds that $i_\ell$ together with $i_s$ does fit inside one partially filled bin. Then, we guess which partially filled bin contains $i_s$ and place it inside this bin together with the largest unplaced item that fits inside this bin. 
The largest fitting item can be found in time $\cO(\log(n))$, and there are at most $\cO(k!)$ possible guesses total.

In the following, we argue that in both cases there exists an optimal solution where the items are placed exactly as the algorithm does assuming all the guesses are correct. 
When all the previous steps are correct, we can consider the residual set of items as a new instance, where there exists an optimal solution, where all partially filled bins contain exactly two large items (and we already know the correct distribution of small items).
For this new instance we fill one bin correctly due to \Cref{clm:SmallItemProperty4} in Case 1. 
Since this bin is filled correctly with respect to an existing optimal solution, we again can consider the residual set of items as an independent instance that needs solving.
On the other hand in Case 2, we know by \Cref{clm:SmallItemProperty3}, that there exists an optimal solution for this reduced instance where $i_s$ is placed together with the largest fitting large item inside one partially filled bin. 
If we guess this bin correctly, we have filled one bin correctly with regard to the considered instance. 
Hence when reducing the considered instance to the residual set of items (without this just filled bin) there exists an optimal solution for this instance with exactly one less bin. 

After placing all the large items, we compare the obtained solution with the so far best solution, save it if it uses the smallest number of bins so far, and backtrack to the last decision.
Since it iterates all possible guesses, this algorithm generates an optimal packing and its run time is bounded by $\cO((k!)^2 \cdot k \cdot 2^k \cdot n  \log(n))$. 

\vspace{-1em}
\section{Conclusion and Further Work}\label{section:conclusion}
We provided a randomized algorithm with one-sided error to identify perfect over-the-rainbow matchings.
Via reductions to this problem, we obtained randomized $4^k\cdot k!\cdot n^{O(1)}$-time algorithms for the vector versions of \acl{BP}, \acl{MKP}, and \acl{BC} parameterized by the number $k$ of \emph{small items.}
We believe that studying this parameter is a natural step towards the investigation of the stronger parameterizations by the number of distinct item types.
In that setting, the number of small items can then be large---however, there are only few small-item-types and, thus, we may hope to adapt some of the techniques developed in this article to this setting.

As a working horse we used a randomized algorithm to find conjoining matchings.
As mentioned by Marx and Pilipczuk, it seems challenging to find such matchings by a deterministic fixed-parameter algorithm.
Alternatively, we could search directly for deterministic algorithms for the problems as presented in this article. We present such an algorithm for \acl{pBP}; however, the techniques used in its design do not seem to generalize to the vector version.

\bibliography{bib}

\end{document}